\newif\ifSC
\SCtrue
\ifSC
\documentclass[onecolumn,draftclsnofoot,12pt]{IEEEtran}
\else
\documentclass[twocolumn,10pt]{IEEEtran}
\fi
\usepackage{subfigure}
\usepackage{etoolbox}

\newtoggle{SC}
\togglefalse{SC}
\ifSC
\toggletrue{SC}
\fi

\usepackage{bbm}
\usepackage{amssymb}
\usepackage{amsmath,amsthm}
\usepackage{color}
\usepackage{amsfonts}
\usepackage{graphicx}
\usepackage{verbatim}
\usepackage{epstopdf}
\usepackage{cite}
\usepackage{tabulary}
\usepackage{mathtools}

\def\primary{primary }
\def\secondary{secondary }
\newcommand{\Primary}{Primary }
\newcommand{\Secondary}{Secondary }

\newcommand{\fracS}[2]{#1/#2}
\newcommand\expect[1]{\mathbb{E}\left[#1\right]}
\newcommand\prob[1]{\mathbb{P}\left[#1\right]}
\newcommand\ind[1]{\mathbbm{1}_{#1}}
\newcommand\indside[1]{\mathbbm{1}\left({#1}\right)}

\newcommand{\SINR}{\text{SINR}}

\newcommand{\expects}[2]{\mathbb{E}_{#1}\left[#2\right] }
\newcommand{\laplace}[1]{\mathcal{L}_{#1} }

\newcommand{\bcomment}[1]{\textcolor{blue}{[\underline{\bf Abhishek}: ] #1}}

\newcommand{\Pc}{\mathrm{P}}
\newcommand{\Rc}{\mathrm{R}}

\newcommand{\SThres}{\tau}
\newcommand{\RThres}{\rho}

 \newcommand{\intd}{\mathrm{d}}

\newcommand{\IT}{\xi}

\newcommand{\tx}{\mathrm{T}}
\newcommand{\rx}{\mathrm{R}}

\renewcommand{\L}{\mathrm{L}}
\newcommand{\N}{\mathrm{N}}

\newcommand{\pu}{\mathrm{P}}
\newcommand{\su}{\mathrm{S}}
\newcommand{\cu}{\mathrm{C}}

\newcommand\lambdaPT{\lambda_\pu}
\newcommand\lambdaPR{\mu_\pu}

\newcommand\lambdaST{\lambda_\su}
\newcommand\lambdaSR{\mu_\su}

\newcommand\lambdaR{\mu}

\newcommand\PPPT{\Phi_\pu}
\newcommand\PPPR{\Psi_\pu}

\newcommand\PPST{\Phi_\su}
\newcommand\PPSR{\Psi_\su}

\newcommand\PPR{\Psi}

\newcommand{\NoiseP}{\sigma^2_\pu}
\newcommand{\NoiseS}{\sigma^2_\su}
\newcommand{\powerP}[1]{P_{\pu#1}}
\newcommand{\powerS}[1]{P_{\su#1}}

\newcommand{\exclusionfun}[2]{E_{#1#2}} 

\newcommand{\UEP}{$\mathrm{UE}_\pu$}
\newcommand{\UES}{$\mathrm{UE}_\su$}

\newcommand{\XX}{\overline{P_{\su}}}
\newcommand{\XXi}[1]{\overline{P_{\su#1}}}

\newcommand{\beam}{\mathrm{b}}

\newcommand{\AntGain}[2]{G_{#1#2}}
\newcommand{\nAntP}{N_\mathrm{P}}
\newcommand{\nAntS}{N_\mathrm{S}}

\newcommand{\bigconditioned}{\left.\vphantom{\frac34}\right|}
\newcommand{\y}{{\mathbf{y}}}
\newcommand{\x}{{\mathbf{x}}}
\newcommand{\expU}[1]{e^{#1}}

\newcommand{\imdfun}[2]{Ko_{#1}\left(#2,\IT\right)}
\newcommand{\imdfuni}[2]{K_{#1}\left(#2\right)}

\newcommand{\imdfunP}[2]{M_{ #1}\left(#2\right)}

\newcommand{\ths}{\text{th}}

\def\home{\hbox{\kern3pt \vbox to13pt{}%
   \pdfliteral{q 0 0 m 0 5 l 5 10 l 10 5 l 10 0 l 7 0 l 7 5 l 3 5 l 3 0 l f
               1 j 1 J -2 5 m 5 12 l 12 5 l S Q }%
   \kern 13pt}}

\newcommand{\complementT}[1]{#1^{\complement}}

\newtheorem{theorem}{Theorem}
\newtheorem{lemma}{Lemma}
\newtheorem{definition}{Definition}

\newcommand{\foreign}{\mathcal{F}\su}
\newcommand{\native}{\mathcal{N}\su}

\newcommand{\ChannelPP}[1]{g_{\x_{#1}}}
\newcommand{\ChannelSP}[1]{g_{\y_{#1}}}
\newcommand{\ChannelPS}[1]{h_{\x_{#1}}}
\newcommand{\ChannelSS}[1]{h_{\y_{#1}}}
\newcommand{\ChannelHome}[1]{F_{ #1}}

\newcommand{\LinkPP}[1]{s_{\x_{#1}}}
\newcommand{\LinkSP}[1]{s_{\y_{#1}}}
\newcommand{\LinkPS}[1]{t_{\x_{#1}}}
\newcommand{\LinkSS}[1]{t_{\y_{#1}}}
\newcommand{\LinkHome}[1]{T_{ #1}}

\newcommand{\AnglePP}[1]{\theta_{\x_{#1}}}
\newcommand{\AnglePS}[1]{\omega_{\x_{#1}}}

\newcommand{\PRevConst}{M_\pu}
\newcommand{\SRevConst}{M_\su}

\newcommand{\PRevFun}{\mathcal{M}_\pu}
\newcommand{\SRevFun}{\mathcal{M}_\su}

\newcommand{\PLicFun}{\mathcal{P}_\pu}
\newcommand{\SLicFun}{\mathcal{P}_{\su\cu}}

\newcommand{\PLicConst}{\Pi_\pu}
\newcommand{\SLicConst}{\Pi_{\su\cu}}

\newcommand{\SPLicFun}{\mathcal{P}_{\su\pu}}
\newcommand{\SPLicConst}{\Pi_{\su\pu}}

\newcommand{\PUtFun}{\mathcal{U}_\pu}
\newcommand{\SUtFun}{\mathcal{U}_\su}
\newcommand{\CUtFun}{\mathcal{U}_\cu}

\iftoggle{SC}{\newcommand{\scf}{.55}}
{\newcommand{\scf}{.9}}
\iftoggle{SC}{\newcommand{\scft}{.7}}
{\newcommand{\scft}{.9}}

\iftoggle{SC}{\newcommand{\begs}{\begin{small}}}{\newcommand{\begs}{}}
\iftoggle{SC}{\newcommand{\ens}{\end{small}}}{\newcommand{\ens}{}}
\newcommand{\DCPbreak}{\iftoggle{SC}{}{\nonumber\\&}}
\newcommand{\DCPbreakI}{\iftoggle{SC}{}{\right.\nonumber\\&\left.}}
\newcommand{\DCPbreakII}{\iftoggle{SC}{}{\right.\right.\nonumber\\&\left.\left.}}
\newcommand{\DCPbreakIII}{\iftoggle{SC}{}{\right.\right.\right.\nonumber\\&\left.\left.\left.}}
\newcommand{\DCPbreakIV}{\iftoggle{SC}{}{\right.\right.\right.\right.\nonumber\\&\left.\left.\left.\left.}}

\newcommand{\DCspace}{\iftoggle{SC}{}{\hspace{.2in}}}

\newcommand{\insertnotationtable}{
\begin{table}[ht!]\label{notation}
	\caption{Summary of Notation}
	\begin{tabulary}{\columnwidth}{ |l | L | }\hline
		{\bf Notation} &{\bf Description}\\ \hline
		$\PPPT, \lambdaPT,\x_i$ & For the \primary operator: PPP modeling locations of BSs, BS density, location of $i\ths$ BS. \\ \hline
		$\PPPR, \lambdaPR$,\UEP & For the  \primary : PPP modeling locations of users, user density, the typical user at the origin. \\ \hline
		$\powerP{},\AntGain{\pu}{}(\cdot), \nAntP$ & For the \primary: transmit power of BSs, BS antenna pattern, and number of BS antennas. \\ \hline
		$W,\IT$ &  Licensed bandwidth, maximum interference limit for secondary operator. \\ \hline
		$\PPST, \lambdaST,\y_i$ & For the secondary operator: PPP modeling locations of BSs, BS density, location of $i\ths$ BS. \\ \hline
		$\PPSR, \lambdaSR$,\UES & For the secondary: PPP modeling locations of users, user density, the typical user at the origin. \\ \hline
		$\powerS{i},\XXi{i},\AntGain{\pu}{}(\cdot), \nAntS$ & For the secondary: transmit power of the $i\ths$ BS, Normalized transmit power of the the $i\ths$ BS defined as $\powerS{i}/\IT$ , BS antenna pattern, and number of BS antennas.\\ \hline
		$\L,\N$ &   Possible values of link type: L denotes LOS, N denotes NLOS.\\ \hline
		$\LinkPP{},\ChannelPP{}$ & For the link between \UEP~and BS at $\x$: $\LinkPP{}\in\{\L,\N\}$ denotes the link type and $\ChannelPP{}$ is the fading.\\ \hline
		$\LinkPS{},\ChannelPS{}$ & For the link between \UES~and BS at $\x$:  $\LinkPS{}\in\{\L,\N\}$ denotes the link type and $\ChannelPS{}$ is the fading.\\ \hline
		$\mathcal{H}_i,\LinkHome{i},\ChannelHome{i}$ & $\mathcal{H}_i$ is the closet (radio distance wise) primary user for the $i\ths$ secondary BS, $\LinkHome{i}$ is the type of the link between this BS and $\mathcal{H}_i$, $\ChannelHome{i}$ is the fading. \\ \hline
		$C_t$ and $\alpha_t$  & Path-loss model parameters: path-loss gain and path-loss exponent of any link of type $t\in\{\L,\N\}$. \\ \hline
		$p_\L(r),p_\N(r)$& The probability of being LOS or NLOS for a link of distance $r$.\\ \hline
		$\exclusionfun{s}{t}(x)$ & Exclusion radius for primary users of type $t$  from the secondary BS when it is associated with a $s$ type primary user located at distance $x$. \\ \hline
		$\NoiseP,\NoiseS$ & Noise power at the \UEP~and \UES. \\\hline
		$P^\mathrm{c}_\pu(\cdot),P^\mathrm{c}_\su(\cdot)$ &  Coverage probability  of \UEP~and \UES.\\ \hline
		$\Rc^{\mathrm{c}}_\pu(\cdot),\Rc^{\mathrm{c}}_\pu(\cdot)$ &  Rate coverage of \UEP~ and \UES. \\ \hline
		$\PRevFun(\cdot),\SRevFun(\cdot)$& Revenue functions for the primary and secondary operator. \\\hline
		$\PLicFun(\cdot),\SLicFun(\cdot)$& License cost functions for the primary and secondary operators to the central entity. \\\hline
		$\SPLicFun(\cdot)$& License cost functions given by the secondary operator to the primary operator. \\\hline
		$\PUtFun(\cdot),\SUtFun(\cdot),\CUtFun(\cdot)$& The total revenue  functions of the  primary operator,  the secondary operator and the central entity. \\\hline
	\end{tabulary}\vspace{-0.2in}
\end{table}
}

\allowdisplaybreaks

\begin{document}
\title{Gains of Restricted Secondary Licensing in Millimeter Wave Cellular Systems}  \author{Abhishek K. Gupta, Ahmed Alkhateeb, Jeffrey G. Andrews, and \\Robert W. Heath, Jr. \thanks{A. K. Gupta (g.kr.abhishek@utexas.edu), A. Alkhateeb (aalkhateeb@utexas.edu), J. G. Andrews (jandrews@ece.utexas.edu) and R. W. Heath Jr. (rheath@utexas.edu) are with WNCG Group, The University of Texas at Austin, Austin,
TX 78712 USA.} \thanks{ This work is supported in part by the National Science Foundation under Grant 1514275.}\thanks{A shorter version of this paper has been submitted to IEEE Globecom, April, 2016 \cite{GuptaAlkhateeb2016}.}}
\maketitle

\begin{abstract} 
	Sharing the spectrum among multiple operators seems promising in millimeter wave (mmWave) systems. One explanation is the highly directional transmission in mmWave, which reduces the interference caused by one network on the other networks sharing the same resources. In this paper, we model a  mmWave cellular system where an operator that primarily owns an exclusive-use license of a certain band can sell a restricted secondary license of the same band to another operator. This secondary network has a restriction on the maximum interference it can cause to the original network. Using stochastic geometry, we derive expressions for the coverage and rate of both networks, and establish the feasibility of secondary licensing in licensed mmWave bands. To explain economic trade-offs, we consider a revenue-pricing model for both operators in the presence of a central licensing authority. Our results show that the original operator and central network authority can benefit from secondary licensing when the maximum interference threshold is properly adjusted. This means that the original operator and central licensing authority have an incentive to permit a secondary network to restrictively share the spectrum. Our results also illustrate that the spectrum sharing gains increase with narrow beams and when the network densifies.

	{{{\em Index Terms}}}---Millimeter wave cellular systems, spectrum sharing, secondary licensing.

\end{abstract}

\section{Introduction} \label{sec:Intro}

Communication over mmWave frequencies can leverage the large bandwidth available at these frequency bands. This makes mmWave a promising candidate for next- generation cellular systems \cite{PiKhan2011,Andrews5G,Boccardi2014,Rangan2014}. Two key features of  mmWave cellular communication are directional transmission with narrow beams and sensitivity to blockage \cite{SinghBackHaul2015,Bai2014}. This results in a lower level of  interference, opening up the feasibility of spectrum sharing between multiple operators in mmWave bands \cite{GuptaAndHeath2016}. When an operator, though, already has an exclusive use of a spectral block, it will only share its spectrum if it 
results in a selfish benefit. In this paper, we establish the potential gains when a central licensing authority and a spectrum-owning operator sell a restricted-access license to a secondary operator. 

\subsection{Prior Work}

At conventional cellular frequencies, operators own exclusive licenses that give them the absolute right of using a particular frequency band. One drawback of exclusive licensing is that some portions of the spectrum remain highly underutilized \cite{FCC2002}. To overcome that, secondary network operation--also known as cognitive radio networks \cite{Haykin2005,Kang2009,Stevenson2009,Akyildiz06,Stotas2011}--can be  used \cite{Lima2012,ElSawy2014}. 
The key operational concept of secondary networks is to serve their users without exceeding a certain interference threshold at the primary network, that owns the spectrum. One main approach to guarantee that is continuous spectrum sensing \cite{Lima2012,ElSawy2014}. This, however, consumes a lot of power and time-frequency resources, which diminishes the practicality of spectrum-sensing based cognitive radio systems. As shown in \cite{SinghBackHaul2015,Bai2014}, mmWave systems experience relatively low interference due to directionality and sensitivity to blockage. This motivates sharing the mmWave spectrum among different operators without any coordination, i.e., without the licensee controlling the secondary operators \cite{GuptaAndHeath2016}. It represents, therefore, the opposite extreme versus instantaneous spectrum-sensing based cognitive radios. An intermediate solution, between these two extremes, is to allow some static coordination based on large channel statistics instead of the continuous sensing. While spectrum sharing can be beneficial for mmWave systems even without any coordination \cite{GuptaAndHeath2016}, its gain over exclusive licensing can probably be magnified with some static coordination. Exploring the potential gains of such static coordination based spectrum sharing in mmWave cellular systems is the topic of this paper.

Using stochastic geometry tools, some research has been done on analyzing the performance of cognitive radio networks at conventional cellular frequencies \cite{Khosh2013,ElSawy2014,Nguyen2012}. In \cite{Khosh2013}, a network of a primary transmitter-receiver pairs and secondary PPP users was considered, and the outage probability of the primary links were evaluated. In \cite{ElSawy2014}, a cognitive cellular network with multiple primary and secondary base stations was modeled, and the gain in the outage probability due to cognition was quantified. In \cite{Nguyen2012}, a cognitive carrier sensing protocol was proposed for a network consisting of multiple primary and secondary users, and the spectrum access probabilities were characterized. The work in \cite{Khosh2013,ElSawy2014,Nguyen2012}, though, did not consider mmWave systems and their differentiating features. In \cite{GuptaAndHeath2016}, stochastic geometry was employed to analyze spectrum-sharing in mmWave systems but with no coordination between the different operators. When some coordination exists between these operators, evaluating the network performance becomes more challenging and requires new analysis, which is one of the contributions in our work.




\subsection{Contributions} 
In this paper, we consider a downlink mmWave cellular system with a primary and a secondary operator to evaluate the benefits of secondary licensing in mmWave systems. The  main contributions of our work are summarized as follows.
\begin{itemize}
	\item \textbf{A tractable model for secondary licensing in mmWave networks:}
	We propose a model for mmWave cellular systems where an operator owns an exclusive-use license to a certain band with a provision to give a restricted license to another operator for the same band. Note that there are different ideas in the spectrum market for how this restricted license works \cite{Bae2008}.  We call the operator that originally owns the spectrum the \textit{\primary operator}, and the operator with restricted license the \textit{secondary operator}. In our model, this restricted secondary license requires the licensee to adjust the transmit power of its BSs such that the average  interference at any user of the primary operator is less than a certain threshold. Due to this restriction, the transmit power of the secondary BSs depends on the primary users in its neighborhood, and hence, it is a random variable. This required developing new analytical tools to characterize the system performance, which is one of the paper's contributions over prior work. 
	
	\item \textbf{Characterizing the performance of the restricted spectrum sharing networks:} Using stochastic geometry tools, we derive expressions for the coverage probability and area spectral efficiency of the \primary and restrcited \secondary networks as functions of the interference threshold. Results show that restricted secondary licensing can achieve coverage and rate gains for the \secondary networks with a negligible impact on the \primary network performance. Compared to the case when the secondary operator is allowed to share the spectrum without any coordination \cite{GuptaAndHeath2016}, our results show that restricted secondary licensing can increase the sum-rate of the sharing operators. This is in addition to the practical advantage of providing a way to differentiate the spectrum access of the different operators.

	\item \textbf{Optimal licensing and pricing:}  We present a revenue-pricing model for both the primary and secondary operators in the presence of a central licensing entity such as the FCC. We show that with the appropriate adjustment of the interference threshold, both the original operator and the central entity can benefit from the secondary network license. Therefore, they have a clear incentive to allow restricted secondary licensing. Further, the results show that the secondary interference threshold needs to be carefully adjusted to maximize the utility gains for the primary operator and the central licensing authority. As the optimal interference thresholds that maximize the central authority can be different than that of the primary operator, the central authority may have an incentive to push the primary operator to share even if it experiences more degradation than otherwise allowable. 
\end{itemize}

The rest of the paper is organized as follows. Section \ref{sec:Model} explains the system and network model and presents the secondary licensing rules. In Section \ref{sec:Performance}, the expressions for SINR, rate coverage probability and  aggregate median rate per unit area for each operator are derived. In Section \ref{sec:Licensing}, we explain the pricing and revenue model. Section \ref{sec:Results} presents numerical results and derives main insights of the paper. We conclude in Section \ref{sec:Conc}.

\section{Network and System Model} \label{sec:Model}
In this paper, we consider a mmWave cellular system where an operator owns an exclusive-use license to a frequency band of bandwidth $W$. There is a provision that this licensee can also give a restricted \secondary license to another operator for the same band. To distinguish the two networks, we call the 
first operator the \primary operator and the second operator as \secondary operator.

 The \primary operator has a network of BSs and users. We model the locations of the \primary BSs as a Poisson point process  (PPP) $\PPPT=\{\x_i\}$ with intensity $\lambdaPT$ and the location of users as another PPP  $\PPPR$ with intensity $\lambdaPR$. We denote the the distance of $i\ths$ primary BS from the origin by $x_i=\|\x_i\|$. Each BS of the \primary operator transmits with a power $\powerP{}$. We assume that the \secondary license allows the owning entity to use the licensed band with a restriction on the transmit power: each BS of the \secondary operator adjusts its transmit power so that its average interference on any \primary user does not exceed a fixed threshold $\IT$. We model the BS locations of the \secondary operator as a PPP $\PPST=\{\y_i\}$ with intensity $\lambdaST$ and locations of its users as another PPP $\PPSR$ with intensity $\lambdaSR$. Further, we let $y_i=\|\y_i\|$ denote the distance of the $i\ths$ secondary BS from the origin. The transmit power of the $i\ths$ \secondary BS is denoted by $\powerS{i}$. We assume that all the four PPPs are independent. The PPP assumption can be justified by the fact that nearly any BS distribution in 2D results in a small fixed SINR shift relative to the PPP \cite{Guo2015, GantiArxiv} and has been used in the past to model single and multi-operator mmWave systems \cite{Bai2014,SinghBackHaul2015,DaSilva2015,GuptaAndHeath2016}.  

\subsection{Channel and SINR Model} \label{subsec:Channel_Model}
We consider the performance of the downlink of the \primary and \secondary networks separately. In each case, we consider a typical  user to be 
located at the origin. We assume an independent blocking model where  a link between a user and a BS located at distance $r$ from this user can be  either NLOS (denoted by $\N$) with a probability  $p_\N(r)$ or a LOS (denoted by $\L$) with a probability  $p_\L(r)=1-p_\N(r)$ independent to other links.  One particular example of this model is the exponential blocking model \cite{Bai2014}, where $p_\L(r)=\exp(-\beta r)$.  The pathloss from a BS to a user is given as $\ell_t(r)=C_t r^{-\alpha_t}$ where $t\in\{\L,\N\}$ denotes the type of the BS-user link, $\alpha_t$ is the pathloss exponent, and $C_t$ is near-field gain for the $t$ type links.  

For the typical \primary user \UEP, let $\LinkPP{}$ denote the type of the link between the BS at $\x$ and this user, and let $\ChannelPP{}$ represent the channel fading. Similarly, for the typical \secondary user \UES, let $\LinkPS{i}$ and $\ChannelPS{i}$ denote the type of its link to the BS at $\x$ and its channel fading. For analytical tractability, we assume all the channels have normalized Rayleigh fading, which means that all the fading variables are exponential random variables with mean 1.

We assume that each BS is equipped with a steerable directional antenna. The BS antennas at the \primary BSs has the following radiation pattern \cite{Bai2014,Akoum2012,Hunter2008} 
\begin{align} 
G_\pu(\theta)=\begin{cases} 
\AntGain{\pu}{1} & \text{ if } |\theta|\le\theta_\mathrm{\pu\beam}/2\\
\AntGain{\pu}{2} & \text{ if } |\theta|>\theta_\mathrm{\pu\beam}/2
\end{cases},
\end{align}
where $\theta\in[-\pi,\pi]$ is the angle between the beam and the user, $\AntGain{\pu}{1}$ is the main lobe gain, $\AntGain{\pu}{2}$ is the side lobe gain, and $\theta_\mathrm{\pu\beam}$ is half-power beamwidth. To satisfy the power conservation constraint, which requires the total transmitted power to be constant and not a function of the beamwidth, we normalize the gains such that $\AntGain{\pu}{1}\frac{\theta_\mathrm{\pu\beam}}{2 \pi}+\AntGain{\pu}{2}\frac{(2 \pi-\theta_\mathrm{\pu\beam})}{2\pi}=1$. Similarly, the radiation pattern of the antennas at a \secondary BS is given by $\AntGain{\su}{}(\theta)$ with parameters $\AntGain{\su}{1}$,$\AntGain{\su}{1}$ and $\theta_\mathrm{\su\beam}$.

\insertnotationtable

Both operators follow maximum average received power based association where a user connects to a BS providing the maximum received power averaged over fading. We call this BS the {\em tagged} BS. The tagged BS steers its antenna beam towards the user to guarantee the maximum antenna gain ($\AntGain{\pu}{1}$ or $\AntGain{\su}{1}$). We take this steering direction as a reference for the other directions. We denote the angle between the antenna of a BS at $\x$ and the \primary user by $\AnglePP{}$ and the \secondary user by $\AnglePS{}$. We assume that  a user can connect only to a BS in their own network. Now, we provide the SINR expression for  the typical user of each operator (See Fig.\ref{fig:sysmod}). 

\begin{enumerate}
\item Primary user \UEP~at the origin:  Let us denote the tagged BS by $\x_0\in \Phi_1^\mathrm{T}$. The SINR for this typical user is then given as
\begin{align}
\hspace{-0.35in}\SINR_{\su0}&=\frac{\powerP{}       \AntGain{\pu}{1}\ChannelPP{0}C_{\LinkPP{0}}  x_0^{-\alpha_{\LinkPP{0}}}} 
				{
				\iftoggle{SC}{\NoiseP+{\sum\limits_{\x_i\in\PPPT\setminus\x_ 0} \powerP{}\AntGain{\pu}{}(\theta_i)\ChannelPP{i}C_{\LinkPP{i}} x_i^{-\alpha_{\LinkPP{i}}}}{
				+\sum\limits_{\y_i\in\PPST} \powerS{i}\AntGain{\su}{}(\omega_i)\ChannelSP{i}C_{\LinkSP{i}} y_i^{-\alpha_{\LinkSP{i}}}
				}
				}
				{\splitfrac{\NoiseP+\sum\limits_{\x_i\in\PPPT\setminus\x_ 0} \powerP{}\AntGain{\pu}{}(\theta_i)\ChannelPP{i}C_{\LinkPP{i}} x_i^{-\alpha_{\LinkPP{i}}}}{
				+\sum\limits_{\y_i\in\PPST} \powerS{i}\AntGain{\su}{}(\omega_i)\ChannelSP{i}C_{\LinkSP{i}} y_i^{-\alpha_{\LinkSP{i}}}
				}
				}
				} .
\end{align}
\item Secondary user \UES~at the origin:  Let us denote the tagged BS by $\y_0\in \Phi_2^\mathrm{T}$. The SINR for this typical user is then given as
\begin{align}
\hspace{-0.35in}\SINR_{\pu0}&=\frac{\powerS{0}\AntGain{\su}{1}\ChannelSS{0}C_{\LinkSS{0}} y_0^{-\alpha_{\LinkSS{0}}}} 
				{\iftoggle{SC}{\NoiseS+\sum\limits_{\x_i\in\PPPT} \powerP{}\AntGain{\pu}{}(\theta_i)\ChannelPS{i}C_{\LinkPS{i}} x_i^{-\alpha_{\LinkPS{i}}}
				+\sum\limits_{\y_i\in\PPST\setminus \y_0} \powerS{i}\AntGain{\su}{}(\omega_i)\ChannelSS{i}C_{\LinkSS{i}} y_i^{-\alpha_{\LinkSS{i}}}
				}
				{\splitfrac{\NoiseS+\sum\limits_{\x_i\in\PPPT} \powerP{}\AntGain{\pu}{}(\theta_i)\ChannelPS{i}C_{\LinkPS{i}} x_i^{-\alpha_{\LinkPS{i}}}}{
				+\sum\limits_{\y_i\in\PPST\setminus \y_0} \powerS{i}\AntGain{\su}{}(\omega_i)\ChannelSS{i}C_{\LinkSS{i}} y_i^{-\alpha_{\LinkSS{i}}}
				}}
				}\label{eq:secSINRdef}.
\end{align}
\end{enumerate}

\subsection{Restricted Secondary Licensing}
We now describe the restrictions on the \secondary licenses and the sensing mechanism used by the \secondary licensee. We assume that all \secondary BSs scan for \primary users in their neighborhood. Each \secondary BS  associates itself with the closest (radio-distance wise {\em i.e.} the one providing it the highest average received power) \primary user.   We call this associated primary user as the home \primary user  of the $i\ths$ \secondary BS and denote it by $\mathcal{H}_i$. Also, we call the \secondary BSs attached to the $i\ths$ \primary user as its native BSs (see Fig. \ref{fig:assocmod}) and denote the set of these BSs by $\mathcal{N}_i$.  

\begin{figure}[t!]
	\begin{center}
	\iftoggle{SC}{
		\includegraphics[width=0.7\textwidth,clip,trim=0 140  0 0]{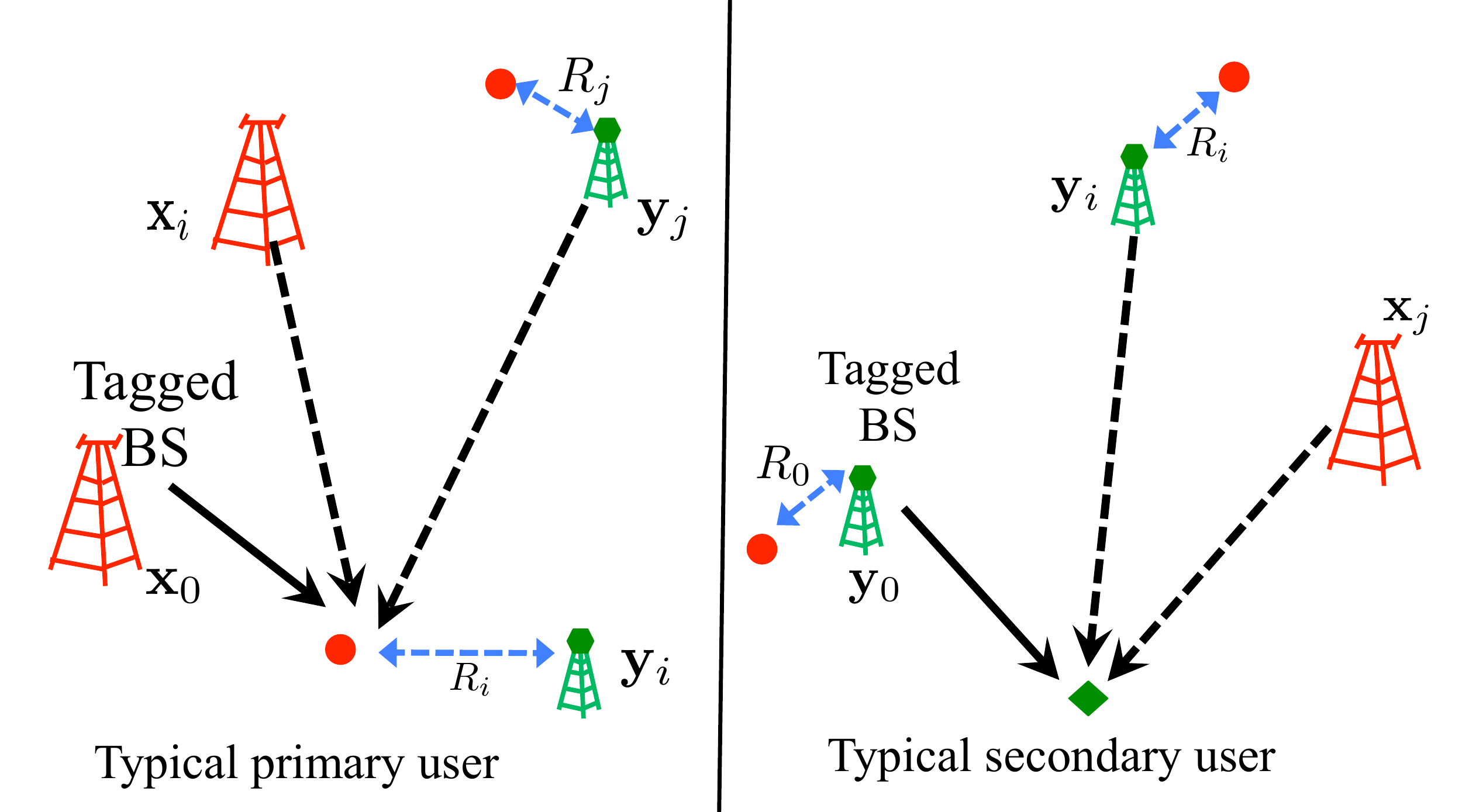}
		}{
		\includegraphics[width=0.5\textwidth,clip,trim=0 140  0 0]{figs/SysMod}
		}
		\caption{System model illustrating the SINR model for the typical \primary and the \secondary user.  $i\ths$ \secondary BS is attached to the closest \primary user where  distance between the two is denoted by $R_i$. }
		\label{fig:sysmod}
	\end{center}
\end{figure}

\begin{figure}
\begin{center}
\iftoggle{SC}{
		\fbox{\includegraphics[width=0.5\textwidth,clip,trim=50 20  20 0]{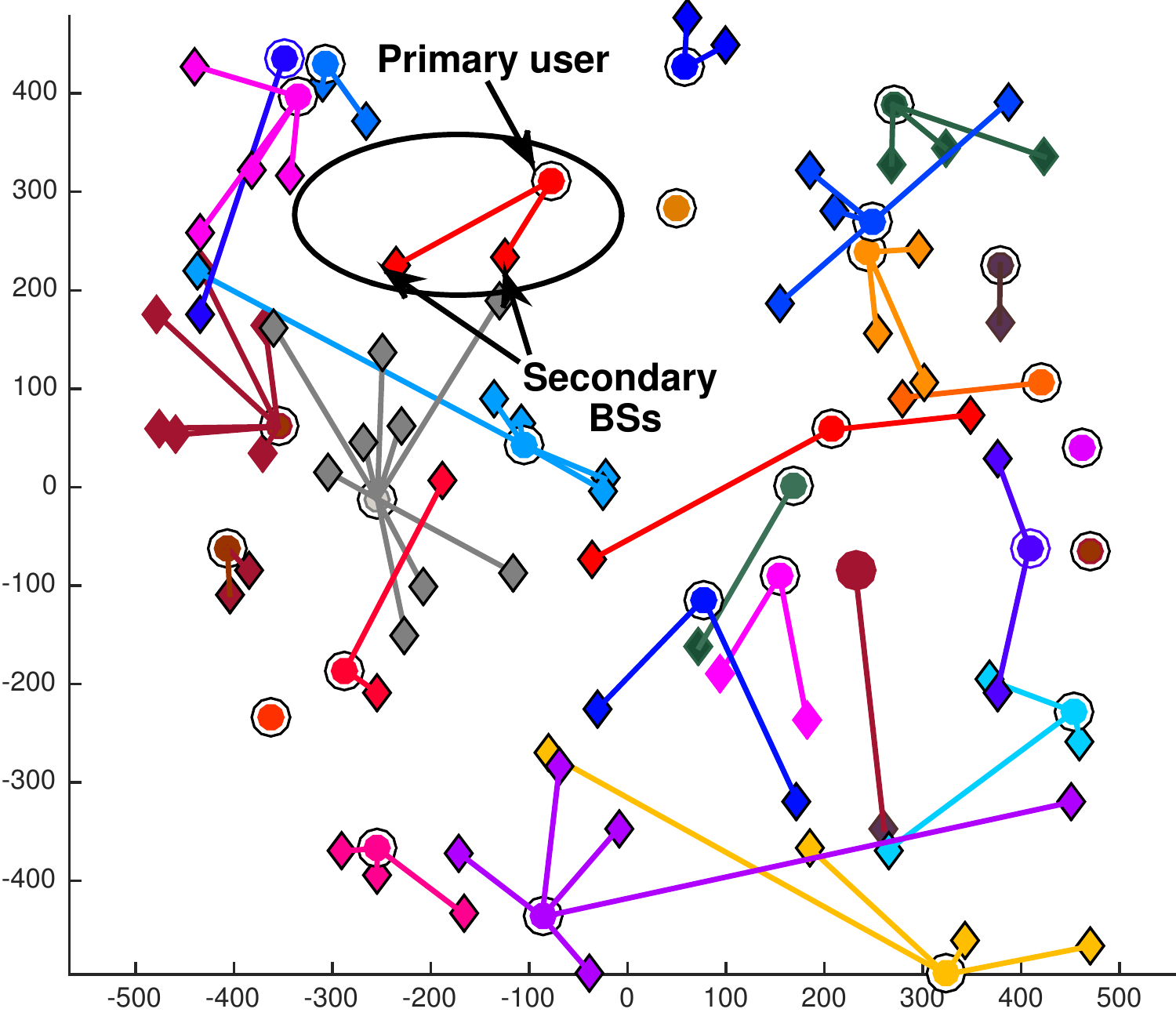}}}{
		\fbox{\includegraphics[width=0.4\textwidth,clip,trim=50 20  20 0]{figs/AssociationPtoS}}}
\caption{The association of \secondary BSs (diamonds) to their home \primary user (circles) in a particular realization of the adopted mmWave system. Each \secondary BS adjusts its transmit power to keep the interference on its home \primary user less than a given limit $\IT$.}
\label{fig:assocmod}
\end{center}
\end{figure}
Let us denote the distance between $i\ths$ \secondary BS and its home primary user $\mathcal{H}_i$  by $R_i$, 
and the type of the link between them by $\LinkHome{i}$. Note that $R_i$ for a secondary BSs  is not independent of $R_i$'s of its adjacent secondary BSs. However, for tractability, we assume that $R_i$ and $T_i$ are independent over $i$, which is a standard assumption in modeling similar association of the interfering mobile transmitters to their respective BSs in uplink analysis \cite{AndGupDhi2016}. 
For a given $R_i=r$ and $T_i=T$, all the other primary users will be outside certain exclusion regions, which is different for the LOS and NLOS users. For a primary user of link type $t$, the radius of its exclusion region, denoted by $\exclusionfun{T}{t}(r)$, is given as 
\begin{align}
\exclusionfun{T}{t}(r)&=\left(\fracS{C_t}{C_T}\right)^\frac{1}{\alpha_t}r^\frac{\alpha_T}{\alpha_t}.
 \end{align}
 Now, the joint distribution of $R_i$ and $T_i$ is given as follows:
 \begin{align}
&f_{R_i}(r,\LinkHome{i}=T)=\DCPbreak2\pi \lambdaPR p_T(r)r\exp\left(-\lambdaPR\left( V_\L(\exclusionfun{T}{\L}(r))+ V_\N(\exclusionfun{T}{\N}(r))\right)\right),
\end{align}
where 
$V_T(r)$ denotes the volume for LOS or NLOS and is defined as
$V_T(r)=2\pi \int_0^r p_T(r)rdr$. See Appendix \ref{app:RiDist} for the derivation of this distribution. Note that $\exclusionfun{T}{t}(r)=r$ when $T=t$.

Recall that the $i\ths$ \secondary BS is restricted to transmit at a certain power such that the average interference at the home user, which is equal to $\powerS{i} C_{T_i}/R_i^{\alpha_{T_i}}$m is below a threshold $\IT$. Therefore, its  transmit power  is given by   
\begin{align} \label{eq:Sec_Power}
\powerS{i}=\IT R_i^{\alpha_{T_i}}/C_{T_i}.
\end{align}
The joint distribution of $\powerS{}$ and $\LinkHome{i}$ can be computed using  transformation of variables as
\begin{align*}
&f_{\powerS{i}}(p,T_i=T)
=\frac{2\pi \lambdaPR a^2}{\alpha_{T}} p^{\frac{2}{\alpha_{T}}-1} p_{T}\left(a p^\frac1{\alpha_{T}}\right)
\DCPbreak\DCspace\DCspace\iftoggle{SC}{}{\times}
\exp\left(-\lambdaPR V_{T}\left(a p^\frac1{\alpha_{T}}\right)-\lambdaPR V_{\complementT{T}}\left(\exclusionfun{T}{\complementT{T}}\left(a p^\frac1{\alpha_{T}}\right)\right)\right),
\end{align*}
where  $a=(C_T/\IT)^{1/\alpha_{T}}$ and $\complementT{T}$ denotes the complement of $T$ {\em i.e.}  $\complementT{T}=\L$ if $T=\N$ and vice-versa.

Finally, if $\XXi{i}=\powerS{i}/\IT$ denotes the normalized transmit power of this BS, then  $\XXi{i}= R_i^{\alpha_{T_i}}/C_{T_i}$, which is a random variable  independent of $\IT$.

\section{Performance Analysis} \label{sec:Performance}
One of the important metrics to quantify the performance of a cellular system is the coverage probability. It is defined as the probability that the SINR at a typical user from its associated BS is above a threshold $\SThres$,
\begin{equation}
\mathrm{P}^\mathrm{c}(\SThres)=\prob{\SINR>\SThres}.
\end{equation}
In this section, we compute the  coverage probability of the typical users \UES~and \UEP.

\subsection{Coverage Probability of the \Secondary Operator} \label{subsec:SecondaryCov}
From the perspective of \UES, the \secondary BSs can be divided into two independent PPPs: LOS BSs ${\PPST}_\L$ and NLOS BSs ${\PPST}_\N$ based on the link type $\LinkSS{i}$ of each BS. Similarly, the \primary BSs are divided into LOS BSs ${\PPPT}_\L$ and NLOS BSs ${\PPPT}_\N$. Recall that we adopt a maximum average received power based association, in which any \secondary user  will associate with the BS  providing highest average received power.  Since each BS has a different transmit power $\powerS{i}$, the BS association to the typical user will be affected by this transmit power. Let $\laplace{I}(s)$ denote the Laplace transform of interference $I$. Now, we give the coverage probability of \UES~in the following Lemma.
 
\begin{lemma}\label{lemma:SINRCov1}
The coverage probability of a typical \secondary user is given as \iftoggle{SC}{}{$\Pc^\mathrm{c}_\mathrm{S}(\SThres)=$}
\begin{align}
\iftoggle{SC}{\Pc^\mathrm{c}_\mathrm{S}(\SThres)=}{}&
\sum_{t_0\in\{\L,\N\}}
\int_0^\infty2\pi\lambdaST
\imdfuni{t_0}{u}
\exp\left(-\frac{\SThres u^{\alpha_{t_0}} }{\IT \AntGain{\su}{1} }\NoiseS\right)
\laplace{I_\pu}\left(\frac{\SThres u^{\alpha_{t_0}} }{ \IT\AntGain{\su}{1}  }\right)
\DCPbreak\DCspace
\laplace{I'_\su}\left(\frac{\SThres u^{\alpha_{t_0}} }{ \IT\AntGain{\su}{1} }\right)
\iftoggle{SC}{\nonumber\\
&}{}
\exp\left(-2\pi\lambdaST\int_0^{u^{\fracS{\alpha_{t_0}}{\alpha_\L}}}
\imdfuni{\L}{z}
z\intd z
\DCPbreakI\DCspace\DCspace
-2\pi\lambdaST\int_0^{u^{\fracS{\alpha_{t_0}}{\alpha_\N}}}
\imdfuni{\N}{z}
z\intd z\right)
u\intd u
\end{align}
where  $I_\pu$ is the interference from the \primary BSs, $I'_\su$ is the interference from the \secondary BSs satisfying $u^{-\alpha_{\LinkSS{0}}}>\XX C_{\LinkSS{j}} \y_j^{-\alpha_{\LinkSS{j}}}$ and $\imdfuni{t}{u}$ is defined as
\begin{align}
\imdfuni{t}{u}&=\mathbb{E}_{\XX}\left[ p_{t}\left(u\XX^{\frac1{\alpha_{t}}}C_{t}^{\frac{1}{\alpha_{t}}}\right)\XX^{\frac{2}{\alpha_{t}}}\right]C_t^{\frac{2}{\alpha_{t}}}.
\end{align}
\end{lemma}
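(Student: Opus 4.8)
The plan is to follow the standard stochastic-geometry template for maximum-average-received-power association, but to first perform a change of variables that absorbs the random per-BS transmit power into an \emph{effective distance}. The feature that sets this lemma apart from a textbook coverage derivation is that each \secondary BS transmits at its own random normalized power $\XX$, so the association is not simply to the nearest BS. To linearize it, I would assign to a type-$t$ \secondary BS at distance $y$ with normalized power $\XX$ the effective distance $u = y\big/(\XX\,C_t)^{1/\alpha_t}$, chosen so that its average received power, stripped of the antenna gain $\AntGain{\su}{1}$ and the factor $\IT$, equals $u^{-\alpha_t}$. In this coordinate the maximum-power rule becomes a pure nearest-point rule within each link type, at the cost of coupling the two link types through the differing exponents $\alpha_\L,\alpha_\N$.

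The first technical step is to identify the intensity of the effective-distance process for each link type. Starting from $\PPST$ with intensity $\lambdaST$, I would mark each point by (i) its link type to \UES, equal to $t$ with probability $p_t(y)$ at distance $y$, and (ii) its normalized power $\XX$, which is independent of the point's location under the paper's independence assumption and has the density derived in Section~\ref{sec:Model}. Applying the mapping theorem to $(y,\XX)\mapsto u$ and integrating out $\XX$ turns the radial intensity $2\pi\lambdaST\,p_t(y)\,y\,\intd y$ into $2\pi\lambdaST\,u\,\imdfuni{t}{u}\,\intd u$; the Jacobian $\intd y = (\XX C_t)^{1/\alpha_t}\intd u$ produces precisely the moment $\imdfuni{t}{u}=C_t^{2/\alpha_t}\,\mathbb{E}_{\XX}[\XX^{2/\alpha_t}\,p_t(u\XX^{1/\alpha_t}C_t^{1/\alpha_t})]$ from the statement. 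Thus the LOS and NLOS effective-distance processes are independent PPPs with intensities $2\pi\lambdaST\,u\,\imdfuni{\L}{u}$ and $2\pi\lambdaST\,u\,\imdfuni{\N}{u}$.

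Next I would condition on the tagged BS being of type $t_0$ at effective distance $u$. Its density is the type-$t_0$ intensity $2\pi\lambdaST\,u\,\imdfuni{t_0}{u}$ times the void probability that no other \secondary BS delivers more power: a LOS interferer out-powers it iff its effective distance falls below $u^{\alpha_{t_0}/\alpha_\L}$ and an NLOS one iff below $u^{\alpha_{t_0}/\alpha_\N}$ (these exponents come from equating $u^{-\alpha_{t_0}}$ with $z^{-\alpha_\L}$ and $z^{-\alpha_\N}$), which reproduces the exponential exclusion factor. Given $u$ and $t_0$, the desired signal is $\IT\AntGain{\su}{1}\ChannelSS{0}\,u^{-\alpha_{t_0}}$ with $\ChannelSS{0}$ unit-mean exponential, so $\SINR>\SThres$ is equivalent to $\ChannelSS{0}>(\SThres u^{\alpha_{t_0}}/\IT\AntGain{\su}{1})(\NoiseS+I_\pu+I'_\su)$. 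The Rayleigh tail converts this into $\exp(-(\SThres u^{\alpha_{t_0}}/\IT\AntGain{\su}{1})\NoiseS)$ multiplied by the Laplace transforms $\laplace{I_\pu}$ and $\laplace{I'_\su}$ of the independent \primary and \secondary interferences, each evaluated at $\SThres u^{\alpha_{t_0}}/\IT\AntGain{\su}{1}$.

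The step that will need the most care is the \secondary self-interference term $I'_\su$: the interfering \secondary BSs are exactly those that failed to out-power the tagged BS, i.e.\ those surviving the same exclusion, so $\laplace{I'_\su}$ must be taken over this conditioned process rather than the full PPP---this is what the prime in $I'_\su$ flags, and it is where an error would most easily creep in if the conditioning were dropped. Collecting the tagged-BS density, the noise exponential, and the two Laplace transforms, then integrating over $u\in(0,\infty)$ and summing over $t_0\in\{\L,\N\}$, yields the displayed expression.
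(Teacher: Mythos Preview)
Your proposal is correct and is essentially the paper's own argument, with the same key idea: absorb the random normalized power $\XX$ into an effective distance $u=y/(\XX C_t)^{1/\alpha_t}$ so that the $\XX$-dependence collapses into the single averaged kernel $\imdfuni{t}{u}$. The only difference is organizational: the paper applies Campbell--Mecke/Slivnyak to the original PPP $\PPST$ and then performs the substitution $y=u(\XX C_t)^{1/\alpha_t}$ inside the resulting integrals (pulling the $\XX$-expectation through to form $\imdfuni{t}{u}$), whereas you invoke the displacement/mapping theorem up front to obtain an effective-distance PPP of radial intensity $2\pi\lambdaST\,u\,\imdfuni{t}{u}$ and then run the standard nearest-point coverage computation on that transformed process---same steps, different order.
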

\begin{proof}
See Appendix \ref{app:SSINRCoverageStep1}.
\end{proof}
This result is interesting because the distribution of the secondary transmit power $\powerS{}$ is decoupled from most of the terms, which noticeably simplifies the final expressions. As seen from \eqref{eq:secSINRdef}, the term $\powerS{}$  is present in the association rule, serving power, and the interference. In Lemma \ref{lemma:SINRCov1}, this dependency of the coverage probability on $\powerS{}$ is reduced to only one function $K_t(\cdot)$ (see Appendix \ref{app:SSINRCoverageStep1} for the techniques used), making the whole integral easily computable, which is a key analytical contribution of the paper.  
Now, we derive the Laplace transforms of $I_\pu$ and $I'_\su$ which are given in the following Lemmas.

\begin{lemma}\label{lemma:SecPerSI}
The Laplace transform of the interference $I'_\su$ from the \secondary  network is given as
\iftoggle{SC}{
\begin{align}
&\laplace{I'_{\su}}(s)=
\exp\left(-\lambdaST  \sum_{k=1}^2a_k
F_{\su}(s \IT \AntGain{\su}{k} ,u^{{\alpha_{t_0}}})
\right)\\
\text{where }a_1=\theta_{\su\beam}/(2\pi),\text{ }&a_2=1-a_1 
\text{ and } F_{\su}(B,e)=
2\pi\sum_{t\in\{\L,\N\}}{\int_{e^{\frac1{\alpha_t}}}^\infty \frac{\imdfuni{t}{v}}{1+B^{-1}v^{\alpha_t}} v\intd v}.\hspace{0.9in}\nonumber
\end{align}}
{
\begin{align}
&\hspace{0.4in}\laplace{I'_{\su}}(s)=
\exp\left(-\lambdaST  \sum_{k=1}^2a_k
F_{\su}(s \IT \AntGain{\su}{k} ,u^{{\alpha_{t_0}}})
\right)\\
&\text{where }a_1=\theta_{\su\beam}/(2\pi),\text{ }, a_2=1-a_1 
\iftoggle{SC}{}{\nonumber\\}
& \text{and } F_{\su}(B,e)=
2\pi\sum_{t\in\{\L,\N\}}{\int_{e^{\frac1{\alpha_t}}}^\infty \frac{\imdfuni{t}{v}}{1+B^{-1}v^{\alpha_t}} v\intd v}.\hspace{1.1in}\nonumber
\end{align}
}
\end{lemma}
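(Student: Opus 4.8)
The plan is to evaluate $\laplace{I'_\su}(s)=\expect{\exp(-sI'_\su)}$ through the probability generating functional (PGFL) of the \secondary BS PPP $\PPST$, after folding the random transmit power into an effective-distance mark. First I would write $I'_\su=\sum_j\powerS{j}\AntGain{\su}{}(\omega_j)\ChannelSS{j}C_{\LinkSS{j}}y_j^{-\alpha_{\LinkSS{j}}}$, with the sum running over the interfering BSs (those providing less received power than the tagged BS), and substitute $\powerS{j}=\IT\XXi{j}$ with $\XXi{j}=R_j^{\alpha_{\LinkHome{j}}}/C_{\LinkHome{j}}$. Because the model treats the home-user distances $R_j$ (hence the normalized powers $\XXi{j}$), the link types $\LinkSS{j}$, the beam offsets $\omega_j$, and the fades $\ChannelSS{j}$ as independent across $j$ and independent of the point locations, the \secondary network is an independently marked PPP and the PGFL is available.

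The key device is an effective-distance change of variables that absorbs the random power. For a type-$t$ BS carrying normalized power $\XX$, I would define its effective distance $v$ by $v^{-\alpha_t}=\XX C_t y^{-\alpha_t}$, i.e. $y=v(\XX C_t)^{1/\alpha_t}$, so that its normalized received power at the origin is simply $v^{-\alpha_t}$. Pushing the type-$t$ intensity $2\pi\lambdaST p_t(y)\,y\,\intd y$ through this map and then averaging over $\XX$ produces $2\pi\lambdaST\imdfuni{t}{v}\,v\,\intd v$, which is exactly where $\imdfuni{t}{v}=\expects{\XX}{p_t(v\XX^{1/\alpha_t}C_t^{1/\alpha_t})\XX^{2/\alpha_t}}C_t^{2/\alpha_t}$ enters and compresses the whole power distribution. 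In this effective domain the association constraint $u^{-\alpha_{t_0}}>\XX C_{\LinkSS{j}}y_j^{-\alpha_{\LinkSS{j}}}$ becomes $v_j>u^{\alpha_{t_0}/\alpha_t}$ for a type-$t$ interferer, fixing the lower limit $e^{1/\alpha_t}$ with $e=u^{\alpha_{t_0}}$.

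With the interferers living on independent type-$\L$ and type-$\N$ effective PPPs, I would compute the per-point Laplace factor in two averaging steps. Averaging over the exponential fade (mean one) sends each contribution to $(1+s\IT G v^{-\alpha_t})^{-1}$, and averaging over the uniform beam offset, for which the gain equals $\AntGain{\su}{1}$ with probability $a_1=\theta_{\su\beam}/(2\pi)$ and $\AntGain{\su}{2}$ with probability $a_2=1-a_1$, replaces $G$ by the mixture $\sum_{k=1}^2a_k(1+s\IT\AntGain{\su}{k}v^{-\alpha_t})^{-1}$. Applying the PGFL then yields
\[
\laplace{I'_\su}(s)=\exp\left(-2\pi\lambdaST\sum_{t\in\{\L,\N\}}\int_{u^{\alpha_{t_0}/\alpha_t}}^\infty\imdfuni{t}{v}\left(1-\sum_{k=1}^2a_k\frac{1}{1+s\IT\AntGain{\su}{k}v^{-\alpha_t}}\right)v\,\intd v\right).
\]
The identity $1-(1+Bv^{-\alpha_t})^{-1}=(1+B^{-1}v^{\alpha_t})^{-1}$ together with $a_1+a_2=1$ then lets me pull the $v^{\alpha_t}$ into the denominators, giving $\sum_k a_k(1+(s\IT\AntGain{\su}{k})^{-1}v^{\alpha_t})^{-1}$; exchanging the $k$- and $t$-sums recognizes each inner integral as $F_{\su}(s\IT\AntGain{\su}{k},u^{\alpha_{t_0}})$ and reproduces the claimed expression.

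I expect the main obstacle to be justifying and carrying out the effective-distance reduction cleanly, since the random transmit power appears simultaneously in the association rule (through the exclusion of stronger BSs), in the per-BS received power, and in the link-type statistics. The whole point is that the single mapping above decouples it, collapsing a shot noise with random-power marks into a standard PGFL integral with all the power randomness stored in $\imdfuni{t}{\cdot}$. The independence of $R_j$ across $j$ assumed in the model is what makes the marks i.i.d. and legitimizes the PGFL step, and verifying that the exclusion constraint transforms into the single threshold $v_j>u^{\alpha_{t_0}/\alpha_t}$ uniformly in $\XX$ is the other point that needs care.
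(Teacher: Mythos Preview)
Your proposal is correct and follows essentially the same route as the paper: split the interferers by link type, apply the PGFL of the thinned secondary PPP, perform the change of variables $y=v(\XX C_t)^{1/\alpha_t}$ so that the random normalized power collapses into $K_t(v)$ and the exclusion becomes the lower limit $u^{\alpha_{t_0}/\alpha_t}$, then average over the exponential fade and the two-valued antenna gain to produce the $a_k/(1+(s\IT\AntGain{\su}{k})^{-1}v^{\alpha_t})$ kernels. The only cosmetic difference is that the paper treats the LOS and NLOS pieces sequentially and multiplies, whereas you carry the $t$-sum throughout; the mathematics is identical.
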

\begin{proof}
See Appendix \ref{app:SecPerSI}.
\end{proof}

\begin{lemma}\label{lemma:SecPerPI}
The Laplace transform of the interference $I_\pu$ from the \primary network is given as
\begin{align}
\laplace{I_\pu}(s)&=\exp\left(-\lambdaPT  \sum_{k=1}^2b_k
F_\pu(s \AntGain{\pu}{k})
\right)
\end{align}
where $b_1=\theta_{\pu\beam}/(2\pi),b_2=1-b_1$,  
\begin{align*}
F_{\pu}(B)&=2\pi\sum_{t\in\{\L,\N\}}\int_{0}^\infty \frac{ \imdfunP{t}{u}}{1+B^{-1}v^{\alpha_t}}
v\intd v, \text{ and }& 
\iftoggle{SC}{}{\nonumber\\}
\imdfunP{t}{u}&= p_{t}\left(u\powerP{}^{\frac1{\alpha_{t}}}C_{t}^{\frac{1}{\alpha_{t}}}\right)\powerP{}^{\frac{2}{\alpha_{t}}}C_\L^{\frac{2}{\alpha_{t}}}.
\end{align*}
\end{lemma}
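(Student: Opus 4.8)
The plan is to evaluate $\laplace{I_\pu}(s)=\expect{\exp(-sI_\pu)}$ directly, where $I_\pu=\sum_{\x_i\in\PPPT}\powerP{}\AntGain{\pu}{}(\theta_i)\ChannelPS{i}C_{\LinkPS{i}}x_i^{-\alpha_{\LinkPS{i}}}$ is the aggregate primary interference at the typical secondary user appearing in \eqref{eq:secSINRdef}, and the expectation is over the point process $\PPPT$, the i.i.d.\ unit-mean exponential fadings $\ChannelPS{i}$, the link types $\LinkPS{i}$, and the interfering beam orientations. Two structural simplifications relative to Lemma~\ref{lemma:SINRCov1} are worth noting up front: no primary BS serves the secondary user, so there is no excluded ball and the spatial integral runs over all of $\mathbb{R}^2$ rather than outside a cutoff; and the primary transmit power $\powerP{}$ is a deterministic constant, so there is no extra expectation over a random transmit power.

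First I would thin $\PPPT$ into two independent PPPs of intensities $\lambdaPT p_\L(r)$ and $\lambdaPT p_\N(r)$ according to the link type, so the Laplace transform factorizes over $t\in\{\L,\N\}$. Conditioned on an interfering BS at distance $x$ with link type $t$, I would take the per-point expectation in two stages: first over the Rayleigh fading, using $\expect{e^{-ag}}=(1+a)^{-1}$ for a unit-mean exponential $g$, which gives $(1+s\powerP{}\AntGain{\pu}{}(\theta)C_t x^{-\alpha_t})^{-1}$; then over the antenna gain toward the origin. Since an interfering primary BS steers its main lobe toward its own served user rather than toward the origin, its gain in the direction of the secondary user is $\AntGain{\pu}{1}$ with probability $b_1=\theta_{\pu\beam}/(2\pi)$ and $\AntGain{\pu}{2}$ with probability $b_2=1-b_1$, independently of $x$. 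This produces the marginalized factor $\sum_{k=1}^2 b_k(1+s\AntGain{\pu}{k}\powerP{}C_t x^{-\alpha_t})^{-1}$.

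Next I would invoke the PGFL of a PPP, $\expect{\prod_i f(\x_i)}=\exp(-\lambda\int_{\mathbb{R}^2}(1-f(\x))\,\intd\x)$, and pass to polar coordinates with $\intd\x=2\pi x\,\intd x$. Using $b_1+b_2=1$ to write $1-\sum_k b_k(1+\cdot)^{-1}=\sum_k b_k\frac{s\AntGain{\pu}{k}\powerP{}C_t x^{-\alpha_t}}{1+s\AntGain{\pu}{k}\powerP{}C_t x^{-\alpha_t}}$, the exponent becomes a double sum over $t$ and $k$ of spatial integrals weighted by $2\pi\lambdaPT p_t(x)x$. The substitution $x=v(\powerP{}C_t)^{1/\alpha_t}$ is then chosen precisely so that $s\AntGain{\pu}{k}\powerP{}C_t x^{-\alpha_t}=s\AntGain{\pu}{k}v^{-\alpha_t}$, turning each integrand into $(1+(s\AntGain{\pu}{k})^{-1}v^{\alpha_t})^{-1}$ and collapsing the power and near-field gain into $\imdfunP{t}{v}$. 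Each inner integral is then exactly $F_\pu(s\AntGain{\pu}{k})$, so the exponent is $\lambdaPT\sum_{k}b_k F_\pu(s\AntGain{\pu}{k})$, yielding the claimed $\laplace{I_\pu}(s)=\exp(-\lambdaPT\sum_{k=1}^2 b_k F_\pu(s\AntGain{\pu}{k}))$.

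The derivation is mostly mechanical once these ingredients are in place; the step needing the most care is the antenna-gain marginalization, namely justifying that the gain toward the origin is the two-valued random variable with weights $b_1,b_2$ independent of the BS location, paired with the change of variables $x=v(\powerP{}C_t)^{1/\alpha_t}$, which must be matched exactly to the definitions of $F_\pu$ and $\imdfunP{t}{\cdot}$. I would also verify that the resulting $v$-integral converges at infinity, where $\alpha_t>2$ together with the blocking probability $p_t$ inside $\imdfunP{t}{\cdot}$ ensures integrability.
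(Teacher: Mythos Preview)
Your proposal is correct and follows essentially the same route as the paper: split $\PPPT$ into independent LOS and NLOS processes, apply the PPP PGFL, take the exponential-fading MGF and the two-point antenna-gain average with weights $b_1,b_2$, and then perform the substitution $x=v(\powerP{}C_t)^{1/\alpha_t}$ to reduce the spatial integral to $F_\pu(s\AntGain{\pu}{k})$ via $\imdfunP{t}{\cdot}$. The only cosmetic difference is the order in which the fading/gain expectations and the PGFL are written down, which does not change the argument.
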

\begin{proof}
See Appendix \ref{app:SecPerPI}.
\end{proof}
Now, substituting the Laplace transform of the \primary and \secondary interference at the \UES in Lemma \ref{lemma:SINRCov1}, we can compute the final expression of the  coverage probability, which we give in the following theorem.
\begin{theorem}\label{thm:PcS}
The coverage probability of a  typical user of the \secondary operator in a mmWave system with  restricted \secondary licensing  is given as
\begin{align}
&\Pc^\mathrm{c}_\mathrm{S}(\SThres)=\sum_{t_0\in\{\L,\N\}}
\int_0^\infty2\pi\lambdaST
\exp\left[-\lambdaPT  \sum_{k=1}^2b_k
	F_\pu\left(\frac{\SThres u^{\alpha_{t_0}}\AntGain{\pu}{k} }{\IT\AntGain{\su}{1}  } \right)
	\DCPbreakI\DCspace
	-\lambdaST  \sum_{k=1}^2a_k
	F_{\su}\left(\frac{\SThres u^{\alpha_{t_0}} \AntGain{\su}{k}}{ \AntGain{\su}{1} }   ,u^{{\alpha_{t_0}}}\right)
	\right.\nonumber\\
&\DCspace\left.	-\frac{\SThres u^{\alpha_{t_0}} }{\IT\AntGain{\su}{1} }\NoiseS
-2\pi\lambdaST\int_0^{u^{\fracS{\alpha_{t_0}}{\alpha_\L}}}
\imdfuni{\L}{z}
z\intd z
\DCPbreakI\DCspace\DCspace
-2\pi\lambdaST\int_0^{u^{\fracS{\alpha_{t_0}}{\alpha_\N}}}
\imdfuni{\N}{z}
z\intd z\right]
K_{t_0}(u)u\intd u.
\end{align}
\end{theorem}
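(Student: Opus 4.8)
The plan is to prove Theorem \ref{thm:PcS} by direct substitution, treating it as the assembly of the three preceding lemmas rather than as a fresh derivation. Lemma \ref{lemma:SINRCov1} already expresses $\Pc^\mathrm{c}_\mathrm{S}(\SThres)$ as an integral over the serving distance $u$ whose integrand contains the two Laplace transforms $\laplace{I_\pu}(\cdot)$ and $\laplace{I'_\su}(\cdot)$, both evaluated at the common argument $s=\SThres u^{\alpha_{t_0}}/(\IT\AntGain{\su}{1})$, together with a noise exponential, the thinning prefactor $K_{t_0}(u)$, and the two exclusion-region integrals. Lemmas \ref{lemma:SecPerSI} and \ref{lemma:SecPerPI} supply closed-form exponentials for these two Laplace transforms. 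Hence the only work is to insert those exponentials evaluated at the correct argument and then merge every exponential factor into a single exponential inside the $u$-integral.

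First I would substitute $s=\SThres u^{\alpha_{t_0}}/(\IT\AntGain{\su}{1})$ into Lemma \ref{lemma:SecPerPI}. Since that lemma gives $\laplace{I_\pu}(s)=\exp(-\lambdaPT\sum_k b_k F_\pu(s\AntGain{\pu}{k}))$, the argument of $F_\pu$ becomes $\SThres u^{\alpha_{t_0}}\AntGain{\pu}{k}/(\IT\AntGain{\su}{1})$, which reproduces the first term in the exponent of the theorem verbatim, with $b_1,b_2$ carried through unchanged.

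Next I would substitute the same $s$ into Lemma \ref{lemma:SecPerSI}. The one place where care is needed is that this lemma evaluates $F_\su$ at the scaled argument $s\,\IT\AntGain{\su}{k}$ rather than at $s$ itself. Multiplying out, the factor $\IT$ cancels exactly, leaving $\SThres u^{\alpha_{t_0}}\AntGain{\su}{k}/\AntGain{\su}{1}$, which is independent of the interference threshold; the second argument $u^{\alpha_{t_0}}$ of $F_\su$ passes through untouched. This cancellation is precisely why the secondary term of the theorem's exponent carries no $\IT$, and it is the same power-decoupling phenomenon already flagged in the remark after Lemma \ref{lemma:SINRCov1}. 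Tracking this argument scaling consistently between the two Laplace transforms is the only genuine bookkeeping hazard.

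Finally I would collect the remaining exponential factors in the integrand of Lemma \ref{lemma:SINRCov1}, namely the noise term $-\SThres u^{\alpha_{t_0}}\NoiseS/(\IT\AntGain{\su}{1})$ and the two exclusion integrals $-2\pi\lambdaST\int_0^{u^{\alpha_{t_0}/\alpha_t}}K_t(z)\,z\,\intd z$ for $t\in\{\L,\N\}$, and, using $e^{A}e^{B}=e^{A+B}$, fold all of these together with the two substituted Laplace exponentials into a single exponential, leaving the prefactor $2\pi\lambdaST K_{t_0}(u)\,u$ outside. Summing over $t_0\in\{\L,\N\}$ then yields the stated expression. Since every step is a substitution or a merge of exponentials, there is no real analytical obstacle beyond verifying that the $\IT$ cancellation in the secondary interference term is carried out correctly.
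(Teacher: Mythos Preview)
Your proposal is correct and takes essentially the same approach as the paper: the paper's proof of this theorem is literally the one-line remark that the result follows by substituting the Laplace transforms from Lemmas~\ref{lemma:SecPerSI} and~\ref{lemma:SecPerPI} into Lemma~\ref{lemma:SINRCov1}, which is exactly the assembly you describe. Your explicit tracking of the $\IT$ cancellation in the secondary term (because Lemma~\ref{lemma:SecPerSI} evaluates $F_\su$ at $s\,\IT\,\AntGain{\su}{k}$) is correct and is in fact more detailed than what the paper spells out.
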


Since the expression in Theorem \ref{thm:PcS} is complicated, we consider the following three special cases to give simple and closed form expressions.

\noindent\textbf{Special Cases:}
\begin{enumerate}
\item[(i)] Consider a mmWave network with identical parameters for LOS and NLOS channels (which is also the typical assumption for a  UHF system). For this case, we can combine the LOS and NLOS PPPs in to a single PPP of type $t$ for which $\imdfuni{t}{u}$ is given as
 \begin{align}
 \imdfuni{t}{u}=\mathbb{E}_{\XX}\left[\XX^{\frac{2}{\alpha}}\right]C^{\frac{2}{\alpha}}={\expect{R^2}}=\left({\lambdaPR\pi}\right)^{-1},
 \end{align}
  which is no longer a function of $u$. Let us denote this constant by $K$. Similarly, $\imdfunP{t}{u}$ can be simplified as $\imdfunP{t}{u}=\powerP{}^{\frac{2}{\alpha}}C^{\frac{2}{\alpha}}$, which is no longer a function of $u$ and hence can be denoted by $M$. Therefore, $F_{\su}(B,e)$ and $F_{\pu}(B/\IT)$ can be simplified as follows:
\begin{align*}
F_{\su}(B,e)&=K\int_{(e)^{\frac1\alpha}}^\infty \frac{1}{1+(B)^{-1}v^{\alpha}}2\pi v\intd v
\DCPbreak=
KB^{\frac2{\alpha}}\int_{(e/B)^{\frac1\alpha}}^\infty \frac{1}{1+v^{\alpha}}2\pi v\intd v,\\
F_{\pu}(B/\IT)&=\int_{0}^\infty \frac{2\pi\powerP{}^{\frac{2}{\alpha}}C_t^{\frac{2}{\alpha}}}{1+\IT B^{-1}v^{\alpha}}
 v\intd v\DCPbreak
 =(\IT B^{-1})^{-\frac2\alpha}\int_{0}^\infty \frac{2\pi\powerP{}^{\frac{2}{\alpha}}C^{\frac{2}{\alpha}}}{1+ {v}^{\alpha}}
v\intd v
.
\end{align*}
\begin{align*}
&\text{Now, Let us define }\rho(\alpha,\tau)= \int_{\tau^{-1/\alpha}}^\infty\frac{1}{1+v^{\alpha}}2v\intd v 
\text{ and } \DCPbreak\rho(\alpha)= \rho(\alpha,\infty)\text{, then } \hspace{1.85in} \\
&F_{\su}(B,e)=\pi KB^{\frac2{\alpha}}\rho(\alpha,B/e)\Rightarrow\DCPbreak
 F_{\su}\left(\frac{\SThres u^{\alpha} }{ \AntGain{\su}{1} }  \AntGain{\su}{k} ,u^{{\alpha}}\right)
=\pi K\SThres^{\frac2{\alpha}}u^2 {\left(\frac{\AntGain{\su}{k}}{ \AntGain{\su}{1} }\right)}^{\frac2{\alpha}}\rho\left(\alpha,\frac{\SThres \AntGain{\su}{k} }{ \AntGain{\su}{1} }  \right),
\\
&F_{\pu}\left(\frac{B}\IT\right)=\pi \left(\frac{B\powerP{} C}{\IT}\right)^{\frac{2}{\alpha}}\rho(\alpha)
\Rightarrow \DCPbreak
F_{\pu}\left(\frac{\SThres u^{\alpha}\AntGain{\pu}{k} }{\IT\AntGain{\su}{1}  } \right)
=\pi 
\left(\frac{\powerP{}\SThres C}{\IT}\frac{\AntGain{\pu}{k}}{\AntGain{\su}{1}  } \right)^{\frac{2}{\alpha}}  
u^2  
\rho(\alpha) .
\end{align*}
Let us define
$F''_\pu=\sum_{k=1}^2b_k {\left(\frac{\AntGain{\pu}{k}}{\AntGain{\su}{1}  } \right)}^{\frac{2}{\alpha}}\rho(\alpha)$
and
	$F''_\su(\SThres)=\sum_{k=1}^2 a_k {\left(\frac{\AntGain{\su}{k}}{ \AntGain{\su}{1} }\right)}^{\frac2{\alpha}}\rho\left(\alpha,\frac{\SThres \AntGain{\su}{k} }{ \AntGain{\su}{1} }  \right)$.
Then, the  coverage probability is given as
\iftoggle{SC}{\begin{align*}
\Pc^\mathrm{c}_\mathrm{S}(\SThres)&=2\pi\lambdaST
K
\int_0^\infty\expU{-\frac{\SThres \NoiseS }{{{\IT}}\AntGain{\su}{1} }{u}^{\alpha}-u^2 \left(\pi\lambdaPT \IT^{-\frac2\alpha}\SThres^{\frac2\alpha}(\powerP{} C)^{\frac{2}{\alpha}}  F''_\pu+\pi \lambdaST K \SThres^{\frac2{\alpha}} F''_\su(\SThres)+\pi\lambdaST 
K\right)}
u\intd u.
\end{align*} }
{\begin{align*}
&\Pc^\mathrm{c}_\mathrm{S}(\SThres)=2\pi\lambdaST
K
\int_0^\infty\expU{-\frac{\SThres \NoiseS }{{{\IT}}\AntGain{\su}{1} }{u}^{\alpha}}\nonumber\\&\expU{-u^2 \left(\pi\lambdaPT \IT^{-\frac2\alpha}\SThres^{\frac2\alpha}(\powerP{} C)^{\frac{2}{\alpha}}  F''_\pu+\pi \lambdaST K \SThres^{\frac2{\alpha}} F''_\su(\SThres)+\pi\lambdaST 
K\right)}
u\intd u.
\end{align*} }
\item[(ii)] Consider a mmWave system with identical LOS and NLOS channels in the interference limited scenario. In this case, the  coverage probability is given as $\Pc^\mathrm{c}_\mathrm{S}(\SThres)=$
\begin{align} 
&
\hspace{-0.2in}\left[
1+\SThres^{\frac2\alpha}\left(
\frac{\lambdaPT }
{\IT^{\frac2\alpha} \lambdaST} 
\frac{(\powerP{} C)^{\frac{2}{\alpha}}}
{{(\lambdaPR\pi)}^{-1}} 
 \sum_{k=1}^2b_k {\left(\frac{\AntGain{\pu}{k}}{\AntGain{\su}{1}  } \right)}^{\frac{2}{\alpha}}\rho(\alpha)
 \DCPbreakII +
   \sum_{k=1}^2 a_k {\left(\frac{\AntGain{\su}{k}}{ \AntGain{\su}{1} }\right)}^{\frac2{\alpha}}\rho\left(\alpha,\frac{\SThres \AntGain{\su}{k} }{ \AntGain{\su}{1} }  \right)\right)
 \right]^{-1}\label{eq:simplifiedSec}
\end{align}

\textbf{Impact of secondary densification and $\IT$:}
We can see from the result in \eqref{eq:simplifiedSec} that $\Pc^\mathrm{c}_\mathrm{S}(\SThres)$ is invariant if the term $\IT\lambdaST^{\alpha/2}$ is kept constant. This is due to the following observation: if we increase the secondary BS density $\lambdaST$ by a factor of $a$, the distance of  the secondary BSs  decreases by a factor of $\sqrt{a}$ and therefore, the secondary interference increases by the factor of  ${a}^{\alpha/2}$, and if we increase the interference limit $\IT$ by $a$, the secondary interference also increases by $a$. Therefore, densifying the secondary network while reducing its interference threshold by the appropriate ratio keeps the coverage probability constant.

\textbf{Impact of narrowing \secondary antenna:} If we assume that the \secondary antennas are uniform linear arrays of $\nAntS$ antennas, then $a_k$'s and $\AntGain{\su}{k}$'s can be approximated as \cite{VanTrees2002}
\begin{align}
&a_1=\frac{\kappa}\nAntS, \AntGain{\su}{1}=\nAntS, a_2=1-\frac{\kappa}\nAntS,\text{ and }\DCPbreak
\AntGain{\su}{2}=\frac{(1-\kappa)\nAntS}{\nAntS-\kappa}(\approx 1-\kappa\text{ for large } \nAntS)\nonumber
\end{align}
where $\kappa$ is some constant. Now, the term denoting \primary interference decreases as $\frac{\kappa}{\nAntS^{2/\alpha}}$ and the term denoting the \secondary interference decreases as $\frac{\kappa}N\rho(\alpha,\SThres)+\left(\frac{1-\kappa}{\nAntS}\right)^{2/\alpha}$ $\rho\left(\alpha,\SThres\frac{1-\kappa}{\nAntS}\right)$. Therefore, narrowing the \secondary antennas beamwidth noticeably improves the \secondary performance.

\textbf{Impact of narrowing \primary antenna:} With a similar assumption for the primary BSs to have uniform linear arrays of $\nAntP$ antennas, $b_k$'s and $\AntGain{\pu}{k}$'s can be approximated as $
b_1=\frac{\kappa}\nAntP, \AntGain{\pu}{1}=\nAntP, b_2=1-\frac{\kappa}\nAntP,\text{ and }\AntGain{\pu}{2}=\frac{(1-\kappa)\nAntP}{\nAntP-\kappa}(\approx 1-\kappa\text{ for large } N)$. 
Now, the term denoting the \primary interference decreases as $\frac{\kappa}{\nAntP^{1-2/\alpha}}+\left({1-\kappa}\right)^{2/\alpha}$ while the term denoting the \secondary interference remains constant. Therefore, narrowing the \primary beamwidth slightly improves the \secondary performance. For high value of $\IT$ where the \secondary interference dominates, the \secondary performance does not improve by narrowing the \primary antennas. 

\item[(iii)] Suppose that both operators have the same beam patterns with zero side-lobe gain. In this case, the coverage probability can be simplified to a closed form expression: 
\iftoggle{SC}{
\begin{align*}
&\hspace{1in}\Pc^\mathrm{c}_\mathrm{S}(\SThres)=
{\left[1+\frac{\theta_b}{2\pi}\SThres^{2/\alpha}\left(\frac{\lambdaPT}{\lambdaST} \IT^{-\frac2\alpha}\frac{\powerP{}^{\frac{2}{\alpha}}\rho(\alpha)}{(\lambdaPR\pi C^{\frac2\alpha})^{-1}}+  \rho(\alpha,\SThres)\right)\right]}^{-1}\\
&\text{which, for $\alpha=4$, becomes }\Pc^\mathrm{c}_\mathrm{S}(\SThres)=
\left[{1+\frac{\theta_b\sqrt\SThres}{2\pi}\left(\frac1{\sqrt{\IT}} \frac{\lambdaPT}{\lambdaST} \frac{\sqrt{C\powerP{}}}{(\lambdaPR)^{-1}}\frac{\pi^2}2 + \tan^{-1}(\sqrt\tau)\right)}\right]^{-1}.
\end{align*}}
{
\begin{align*}
&\Pc^\mathrm{c}_\mathrm{S}(\SThres)\nonumber\\=&
{\left[1+\frac{\theta_b}{2\pi}\SThres^{2/\alpha}\left(\frac{\lambdaPT}{\lambdaST} \IT^{-\frac2\alpha}\frac{\powerP{}^{\frac{2}{\alpha}}\rho(\alpha)}{(\lambdaPR\pi C^{\frac2\alpha})^{-1}}+  \rho(\alpha,\SThres)\right)\right]}^{-1}
\end{align*}
which, for $\alpha=4$, becomes $\Pc^\mathrm{c}_\mathrm{S}(\SThres)=$
\begin{align*}
&\left[{1+\frac{\theta_b\sqrt\SThres}{2\pi}\left(\frac1{\sqrt{\IT}} \frac{\lambdaPT}{\lambdaST} \frac{\sqrt{C\powerP{}}}{(\lambdaPR)^{-1}}\frac{\pi^2}2 + \tan^{-1}(\sqrt\tau)\right)}\right]^{-1}.
\end{align*}
}
\end{enumerate}

\subsection{Coverage Probability of the \Primary Operator} \label{subsec:PrimaryCov}
Similar to the \secondary case, for \UEP~also, all the \primary and \secondary BSs can be divided into two independent LOS and NLOS PPPs based on the link type between each BS and \UEP. Recall that we have assumed maximum average received power based association, in which any \primary user  will associate with the BS  $\x_0$  providing highest average received power. We, now compute the  coverage probability of the typical \primary user which is given in Lemma \ref{lemma:PSINRCoverageStep1}.
\begin{lemma}\label{lemma:PSINRCoverageStep1}
The  coverage probability of the \primary operator is given as $\iftoggle{SC}{}{P^\mathrm{c}_\pu(\SThres)=}$
\begin{align}
\iftoggle{SC}{P^\mathrm{c}_\pu(\SThres)=}{}&\sum_{t_0\in\{\L,\N\}}
\int_0^\infty2\pi\lambdaPT
\imdfunP{t_0}{u}
\exp\left(-\frac{\SThres u^{\alpha_{t_0}} }{ \AntGain{\pu}{1} }\NoiseP\right)
\laplace{I'_\pu}\left(\frac{\SThres u^{\alpha_{t_0}} }{ \AntGain{\pu}{1}  }\right)
\DCPbreak\DCspace
\laplace{I_\su}\left(\frac{\SThres u^{\alpha_{t_0}} }{ \AntGain{\pu}{1} }\right)
\iftoggle{SC}{\nonumber\\&}{}
\exp\left(-2\pi\lambdaPT\int_0^{u^{\fracS{\alpha_{t_0}}{\alpha_\L}}}
\imdfunP{\L}{z}
z\intd z\DCPbreakI\DCspace
-2\pi\lambdaPT\int_0^{u^{\fracS{\alpha_{t_0}}{\alpha_\N}}}
\imdfunP{\N}{z}
z\intd z\right)
u\intd u
\end{align}
where $I'_\pu$ is the interference from the \primary operator conditioned on the fact that the serving BS is at $\x_0$ and  $I'_\su$ is  the interference from the \secondary operator.
\end{lemma}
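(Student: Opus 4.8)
The plan is to reuse the template that already yielded Lemma~\ref{lemma:SINRCov1} for the \secondary user, specializing it to the \primary user whose tagged BS transmits at the \emph{fixed} power $\powerP{}$ rather than a random power. Because $\powerP{}$ is deterministic, the power-averaging that produced the $K_{t}(\cdot)$ functions in the \secondary analysis collapses, and the received-power process of the \primary BSs is obtained by a pure path-loss mapping. Concretely, I would first split $\PPPT$ into independent LOS and NLOS PPPs via the independent blocking model and the thinning theorem, and likewise split $\PPST$. Then, for each link type $t$, apply the change of variables $r\mapsto u=r\,(\powerP{}C_t)^{-1/\alpha_t}$, so that a \primary BS of type $t$ sitting at normalized coordinate $u$ delivers average received power exactly $u^{-\alpha_t}$. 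Under this mapping the radial intensity of the type-$t$ \primary received-power process becomes $2\pi\lambdaPT\,\imdfunP{t}{u}\,u$, which is exactly where the factor $\imdfunP{t_0}{u}$ and the $C_t,\powerP{}$ terms inside it originate.

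Next I would condition on the tagged BS being of type $t_0$ at normalized coordinate $u$. Since association is by maximum average received power and is restricted to the \primary network, the tagged BS is the point of the superposed \primary received-power process with the smallest $u$-coordinate, contributing the ``density'' factor $2\pi\lambdaPT\,\imdfunP{t_0}{u}\,u\,\intd u$. The requirement that no other \primary BS deliver more power than $u^{-\alpha_{t_0}}$ forces every LOS interferer to have normalized coordinate exceeding $u^{\alpha_{t_0}/\alpha_\L}$ and every NLOS interferer to exceed $u^{\alpha_{t_0}/\alpha_\N}$; the void probabilities of these two exclusion regions for the respective PPPs give the two $\exp\!\big(-2\pi\lambdaPT\int_0^{\cdots}\imdfunP{\cdot}{z}\,z\,\intd z\big)$ factors.

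Then I would handle the SINR threshold event. Writing the signal power as $\AntGain{\pu}{1}\,u^{-\alpha_{t_0}}\,\ChannelPP{0}$ and using that $\ChannelPP{0}$ is unit-mean exponential, the event $\SINR>\SThres$ has conditional probability $\expect{\exp\!\big(-\SThres u^{\alpha_{t_0}}(\NoiseP+I)/\AntGain{\pu}{1}\big)}$, where $I$ is the total interference seen by \UEP. Because $\PPPT$ and $\PPST$ are independent, $I$ splits into independent summands $I'_\pu+I_\su$, so the expectation factorizes into the deterministic noise term $\exp\!\big(-\SThres u^{\alpha_{t_0}}\NoiseP/\AntGain{\pu}{1}\big)$ and the two Laplace transforms $\laplace{I'_\pu}$ and $\laplace{I_\su}$ evaluated at $s=\SThres u^{\alpha_{t_0}}/\AntGain{\pu}{1}$. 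Integrating over $u$ and summing over $t_0\in\{\L,\N\}$ then assembles the claimed expression, with $\laplace{I'_\pu}$ and $\laplace{I_\su}$ deferred to subsequent lemmas.

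The step I expect to be most delicate is the conditioning in the previous paragraph: $I'_\pu$ is the \primary interference under the reduced Palm distribution given the tagged BS, so its support must respect the exclusion region (no \primary BS closer than $u$ in received power), whereas $I_\su$ is the \emph{full} \secondary interference with no exclusion, since cross-network association is disallowed and \secondary BSs interfere regardless of distance. I would invoke Slivnyak's theorem to justify that removing the tagged \primary point leaves the rest of $\PPPT$ Poisson with the exclusion gap, and rely on the paper's independence assumptions to decouple the random \secondary powers $\powerS{}$ from the \primary serving-BS conditioning, so that $\laplace{I_\su}$ (which still carries the randomness of $\powerS{}$) factors out cleanly. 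Keeping these two conditionings separate is where the bookkeeping must be done carefully; the remaining algebra parallels the \secondary derivation but is strictly simpler because no averaging over $\powerS{}$ enters the serving-link and $I'_\pu$ terms.
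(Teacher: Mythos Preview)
Your proposal is correct and follows essentially the same approach as the paper: the paper's proof simply refers back to Appendix~\ref{app:SSINRCoverageStep1} (the proof of Lemma~\ref{lemma:SINRCov1}) with the roles of the \primary and \secondary operators interchanged and the expectation over the serving-BS transmit power dropped because $\powerP{}$ is deterministic. Your explicit spelling-out of the normalized-coordinate change of variables, the Campbell--Mecke/Slivnyak conditioning, the void-probability factors for the exclusion region, and the exponential-fading step mirrors exactly what Appendix~\ref{app:SSINRCoverageStep1} does, and your observation that $I_\su$ carries no exclusion while $I'_\pu$ does is precisely the bookkeeping the paper leaves implicit.
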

\begin{proof}
The proof is similar to the proof in Appendix \ref{app:SSINRCoverageStep1}. The only difference is that the computations for the \primary and \secondary operators are interchanged and there will not be any expectation with respect to the transmit power of the \primary BSs as the \primary transmit power is deterministic.
\end{proof}
We now compute the Laplace transforms of the \primary and \secondary interference which is given in the following two Lemmas.

\begin{lemma}
The Laplace transform of the interference $I'_\pu$ from the conditioned \primary network is given as 
\iftoggle{SC}{
\begin{align}
\laplace{I'_{\pu}}(s)&=\exp\left(-\lambdaPT  \sum_{k=1}^2b_k
E_{\pu}(s  \AntGain{\pu}{k} ,u^{{\alpha_{t_0}}})
\right)\\
\text{where $E_{\pu }(B,e)$ is given as:    }E_{\pu}(B,e)&=B^{\frac2{\alpha_t}}\sum_{t\in\{\L,\N\}}\int_{(e/B)^{\frac1{\alpha_t}}}^\infty \frac{1}{1+v^{\alpha_t}}\imdfunP{t}{vB^{\frac2{\alpha_t}}}2\pi v\intd v.\hspace{1.6in}\nonumber
\end{align}}{
\begin{align}
\laplace{I'_{\pu}}(s)&=\exp\left(-\lambdaPT  \sum_{k=1}^2b_k
E_{\pu}(s  \AntGain{\pu}{k} ,u^{{\alpha_{t_0}}})
\right)\end{align}
where $E_{\pu }(B,e)$ is given as:    
\begin{align}
E_{\pu}(B,e)&=B^{\frac2{\alpha_t}}\sum_{t\in\{\L,\N\}}\int_{(e/B)^{\frac1{\alpha_t}}}^\infty \frac{1}{1+v^{\alpha_t}}\imdfunP{t}{vB^{\frac2{\alpha_t}}}2\pi v\intd v.\nonumber
\end{align}}
\end{lemma}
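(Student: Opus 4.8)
The plan is to compute $\laplace{I'_\pu}(s)=\expect{e^{-sI'_\pu}}$ directly from the probability generating functional (PGFL) of the \primary PPP, treating the conditioning on the serving BS as an exclusion region. First I would write the interference as $I'_\pu=\sum_{\x_i\in\PPPT\setminus\x_0}\powerP{}\AntGain{\pu}{}(\theta_i)\ChannelPP{i}C_{\LinkPP{i}}x_i^{-\alpha_{\LinkPP{i}}}$ and split $\PPPT$ by independent thinning into its LOS part (retention probability $p_\L(r)$) and NLOS part ($p_\N(r)$), so that the two link types contribute independently and the exponent will carry a sum over $t\in\{\L,\N\}$. I would then mark each interfering BS by the gain its beam presents toward the typical \primary user: since an interferer steers toward its own user, this gain is $\AntGain{\pu}{1}$ with probability $b_1=\theta_{\pu\beam}/(2\pi)$ and $\AntGain{\pu}{2}$ with probability $b_2=1-b_1$, independently across points. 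This second marking makes the PGFL factor over $k\in\{1,2\}$ and produces the outer sum $\sum_{k=1}^2 b_k$.

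With both markings in place, I would apply the PGFL $\expect{\prod_i f(x_i)}=\exp(-\int(1-f)\,\Lambda(\intd x))$. Averaging the single-point factor over the exponential (Rayleigh) fading $\ChannelPP{}$ gives, for a type-$t$ interferer at distance $r$ seen with gain $\AntGain{\pu}{k}$, the elementary Laplace factor $\expect{e^{-s\powerP{}\AntGain{\pu}{k}\ChannelPP{}C_tr^{-\alpha_t}}}=(1+s\powerP{}\AntGain{\pu}{k}C_tr^{-\alpha_t})^{-1}$, so that $1-f=1-(1+s\powerP{}\AntGain{\pu}{k}C_tr^{-\alpha_t})^{-1}$.

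Next I would inject the conditioning. Because association is by maximum average received power and the serving BS $\x_0$ is of type $t_0$ at the normalized distance encoded by $u$, Slivnyak's theorem reduces ``serving BS at $\x_0$'' to the requirement that every remaining \primary BS be weaker than $\x_0$: a type-$t$ interferer must lie outside the radius where its average received power equals that of $\x_0$. In the normalized distance $v=r/(\powerP{}C_t)^{1/\alpha_t}$---the same change of variables that turns $p_t(r)2\pi r\,\intd r$ into $\imdfunP{t}{v}2\pi v\,\intd v$ and sends $\powerP{}C_tr^{-\alpha_t}$ to $v^{-\alpha_t}$---this exclusion becomes the lower limit $e^{1/\alpha_t}$ with $e=u^{\alpha_{t_0}}$. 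Collecting everything yields $\laplace{I'_\pu}(s)=\exp(-\lambdaPT\sum_{k=1}^2 b_k E_\pu(s\AntGain{\pu}{k},u^{\alpha_{t_0}}))$, where $E_\pu$ is an integral whose kernel is $1/(1+B^{-1}v^{\alpha_t})$ with $B=s\AntGain{\pu}{k}$. A final substitution $v\mapsto B^{1/\alpha_t}v$ clears the $B^{-1}$ from the kernel, rescales the lower limit to $(e/B)^{1/\alpha_t}$, and pushes the Jacobian and argument into the stated closed form for $E_\pu(B,e)$.

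I expect the main obstacle to be bookkeeping rather than any deep step: I must ensure the two independent markings (LOS/NLOS thinning and antenna-gain marking) genuinely factorize so the exponent separates cleanly into the double sum over $k$ and $t$, and---most delicately---that the maximum-power association condition is translated into the correct type-dependent exclusion radius and carried consistently through the normalization $v=r/(\powerP{}C_t)^{1/\alpha_t}$ and the subsequent rescaling, since the LOS and NLOS exponents $\alpha_\L,\alpha_\N$ differ and each summand must be handled with its own $\alpha_t$.
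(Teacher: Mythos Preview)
Your proposal is correct and follows essentially the same route as the paper. The paper's own proof simply states that it mirrors the proof of Lemma~\ref{lemma:SecPerSI} (PGFL of the PPP, split into LOS/NLOS tiers, average over the antenna-gain mark and the exponential fading, then the substitution $y=v(\powerP{}C_t)^{1/\alpha_t}$), with the only difference being that the \primary transmit power $\powerP{}$ is deterministic so no expectation over it is needed; your plan spells out exactly these steps, including the exclusion-region lower limit $e^{1/\alpha_t}$ coming from the maximum-power association and the final rescaling $v\mapsto B^{1/\alpha_t}v$.
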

\begin{proof}
The proof is similar to the proof of Lemma \ref{lemma:SecPerSI} with  only difference being lack of any expectation with respect to the primary BSs' transmit powers.
\end{proof}
The \secondary interference can be written as sum of following two interferences: the interference $I_{\foreign}$ from the BSs that are not in native set $\mathcal{N}_0$ and interference $I_{\native}$ from the BSs that are in $\mathcal{N}_0$.  The following Lemma gives the Laplace transform of the interference from the secondary operator where functions $E_{\foreign}$ and $E_{\native}$ are due to  $I_{\foreign}$ and  $I_{\native}$ respectively.
\begin{lemma}\label{lemma:PriSI}
The Laplace transform of the interference from the \secondary network  is given as 
\begin{align}
\laplace{I_{\su}}(s)&=
		\exp\left( -\lambdaST  \sum_{k=1}^2a_k \left(
E_{\foreign}\left(s  \AntGain{\su}{k} ,\IT\right) +E_{\native}(s\AntGain{\su}{k})\right)
\right)
\end{align}
where $E_{\foreign }(B,\IT)$  and $E_{\native}(B)$ are given as
\begin{align}
E_{\foreign}(B,\IT)
&=(B\IT)^{\frac2{\alpha_t}}\sum_{t\in\{\L,\N\}}\int_{(\IT B)^{-\frac1{\alpha_t}}}^\infty \frac{\imdfuni{t}{v (B\IT)^{\frac1{\alpha_t}}}}{1+v^{\alpha_t}}2\pi v\intd v\\
E_{\native}(B)&=
		\frac{1}{1+
				(B\IT )^{-1}}\int_0^{1}
			(\imdfuni{\L}{v}+
			\imdfuni{\N}{v})2\pi v \intd v
\end{align}
\end{lemma}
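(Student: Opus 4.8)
The plan is to evaluate $\laplace{I_\su}(s)$ using the decomposition $I_\su = I_{\foreign}+I_{\native}$ introduced just before the statement, and to note that under the assumed independence of the home‑user marks across BSs (the same approximation used for uplink‑type association in \cite{AndGupDhi2016}) the foreign and native BSs are disjoint, independently marked thinnings of $\PPST$, so that $\laplace{I_\su}(s)=\laplace{I_{\foreign}}(s)\,\laplace{I_{\native}}(s)$. As in the proofs of Lemmas \ref{lemma:SINRCov1} and \ref{lemma:SecPerSI}, I would first split $\PPST$ according to the antenna gain seen at the origin into two independent PPPs with gains $\AntGain{\su}{1},\AntGain{\su}{2}$ and intensities $\lambdaST a_1,\lambdaST a_2$; this is what produces the outer $\sum_{k=1}^2 a_k$ and the common $\lambdaST$ in the exponent. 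Each secondary BS then carries an independent normalized‑power mark $\XX=R^{\alpha_T}/C_T$ (from \eqref{eq:Sec_Power}, so its transmit power is $\IT\XX$), an independent link type $t$ with probability $p_t(y)$, and exponential fading $\ChannelSP{}$.

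The decisive step is to decide which BSs are native. A BS is native exactly when the origin is its closest (radio‑distance) primary user, i.e. when $C_t y^{-\alpha_t}\ge 1/\XX$; in the normalized distance $v=y/(\XX C_t)^{1/\alpha_t}$ this is simply the dichotomy $v\le 1$ (native) versus $v>1$ (foreign). For a foreign BS the interference at \UEP~is $\IT\XX\,\AntGain{\su}{k}\,\ChannelSP{}\,C_t y^{-\alpha_t}=\IT\,\AntGain{\su}{k}\,\ChannelSP{}\,v^{-\alpha_t}$, whereas for a native BS the constraint forces the power–pathloss product at the origin to equal $\IT$ exactly, so its interference collapses to the distance‑free quantity $\IT\,\AntGain{\su}{k}\,\ChannelSP{}$.

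For the foreign part I would apply the PGFL, take the exponential‑fading expectation to obtain the per‑interferer factor $1-(1+s\AntGain{\su}{k}\IT v^{-\alpha_t})^{-1}=(1+(s\AntGain{\su}{k}\IT)^{-1}v^{\alpha_t})^{-1}$, restrict the spatial integral to $v>1$, and carry out the change of variables $y\mapsto v$. Its Jacobian, together with the $p_t$ factor and the expectation over $\XX$, assembles precisely into $\imdfuni{t}{v}$, which is exactly the bookkeeping behind the definition of $K_t$ in Lemma \ref{lemma:SINRCov1}; writing $B=s\AntGain{\su}{k}$ this leaves $\sum_t\int_1^\infty \imdfuni{t}{v}(1+(B\IT)^{-1}v^{\alpha_t})^{-1}2\pi v\,\intd v$, and the rescaling $v\mapsto v(B\IT)^{1/\alpha_t}$ recovers the lower limit $(\IT B)^{-1/\alpha_t}$ and the prefactor $(B\IT)^{2/\alpha_t}$ of $E_{\foreign}(s\AntGain{\su}{k},\IT)$ in the statement. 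For the native part the per‑interferer Laplace factor is the constant $(1+(s\AntGain{\su}{k}\IT)^{-1})^{-1}$, which pulls out of the integral; the remaining native measure in normalized coordinates is $\int_0^1(\imdfuni{\L}{v}+\imdfuni{\N}{v})2\pi v\,\intd v$, giving $E_{\native}(s\AntGain{\su}{k})$. Collecting both parts and re‑exponentiating yields the stated product form.

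The main obstacle is the foreign/native bookkeeping rather than any single integral: I must justify the independence of the home‑distance marks across BSs, correctly identify the native set as the normalized region $v\le 1$ (where the marked home user would otherwise sit behind the origin and the interference cap must be enforced), and verify that one change of variables collapses the joint power‑mark, link‑type, and location randomness into $K_t$ with the correct limits. Keeping the two normalized parametrizations consistent, namely the $v\le 1$ versus $v>1$ boundary governing the association and the $(\IT B)^{-1/\alpha_t}$ limit appearing in the final expression, is where the care is needed.
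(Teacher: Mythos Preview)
Your proposal is correct and follows essentially the same route as the paper's proof in Appendix~\ref{app:PriSI}: split $I_\su$ into foreign and native parts via the indicator $v\gtrless 1$ in the normalized coordinate $v=y/(\XX C_t)^{1/\alpha_t}$, apply the PGFL, take the Rayleigh MGF and the $a_k$-weighted antenna expectation, collapse the $\XX$-average into $K_t(\cdot)$, and finally rescale $v\mapsto v(B\IT)^{1/\alpha_t}$ in the foreign integral. The only cosmetic difference is that the paper performs the antenna-gain averaging after the PGFL rather than pre-splitting $\PPST$ into two thinned PPPs, but the computations are identical.
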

\begin{proof}
See Appendix \ref{app:PriSI}.
\end{proof}

Now, substituting the Laplace transforms of $I'_\pu$ and $I_\su$ in  Lemma \ref{lemma:PSINRCoverageStep1}, we can compute the final expression of the coverage probability, which is given in the following theorem.

\begin{theorem}\label{thm:PPc}
The coverage probability of a typical  user of the \primary  operator in a mmWave system  with \secondary licensing is given as
\begin{align}
\iftoggle{SC}{}{&}\Pc^\mathrm{c}_\pu(\SThres)=\iftoggle{SC}{&}{}\sum_{t_0\in\{\L,\N\}}
\int_0^\infty2\pi\lambdaPT
\imdfunP{t_0}{u}
\expU{-\frac{\SThres u^{\alpha_{t_0}} }{ \AntGain{\pu}{1} }\NoiseS}
\DCPbreak
\expU{-\lambdaPT  \sum_{k=1}^2b_k
E_{\pu}\left(\frac{\SThres u^{\alpha_{t_0}} }{ \AntGain{\pu}{1} }  \AntGain{\pu}{k} ,u^{{\alpha_{t_0}}}\right)
-\lambdaST  \sum_{k=1}^2a_k
E_{\foreign}\left(\frac{\SThres u^{\alpha_{t_0}} }{ \AntGain{\pu}{1} } \AntGain{\su}{k} ,\IT\right)}\nonumber\\\iftoggle{SC}{\times}{} &
\expU{-\lambdaST \sum_{k=1}^2 a_k E_{\native}\left(\frac{\SThres u^{\alpha_{t_0}} }{ \AntGain{\pu}{1} }\AntGain{\su}{k}\right)}
\DCPbreak
\expU{\left(-2\pi\lambdaPT\int_0^{u^{\fracS{\alpha_{t_0}}{\alpha_\L}}}
\imdfunP{\L}{z}
z\intd z
-2\pi\lambdaPT\int_0^{u^{\fracS{\alpha_{t_0}}{\alpha_\N}}}
\imdfunP{\N}{z}
z\intd z\right)}
u\intd u.
\end{align}
\end{theorem}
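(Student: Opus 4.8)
The plan is to obtain Theorem \ref{thm:PPc} purely by assembling the master coverage formula of Lemma \ref{lemma:PSINRCoverageStep1} with the two interference Laplace transforms established immediately above it, so essentially no new probabilistic argument is needed at the level of the theorem itself. I would start from Lemma \ref{lemma:PSINRCoverageStep1}, which, after conditioning on the effective distance $u$ and the link type $t_0$ of the tagged \primary BS, expresses $\Pc^\mathrm{c}_\pu(\SThres)$ as a sum over $t_0\in\{\L,\N\}$ and an integral over $u$ of a product of four factors: the serving-link intensity $2\pi\lambdaPT\imdfunP{t_0}{u}$, the noise attenuation $\exp(-\SThres u^{\alpha_{t_0}}\NoiseP/\AntGain{\pu}{1})$, the two interference Laplace transforms $\laplace{I'_\pu}$ and $\laplace{I_\su}$, and the association (no-closer-BS) term built from the $\imdfunP{\L}{\cdot}$ and $\imdfunP{\N}{\cdot}$ integrals. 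The observation that makes the remaining work mechanical is that all three of the noise and interference terms are already pinned by Lemma \ref{lemma:PSINRCoverageStep1} to the same scalar argument $s=\SThres u^{\alpha_{t_0}}/\AntGain{\pu}{1}$; this argument is the one forced by the Rayleigh-fading coverage identity, in which $\prob{\SINR>\SThres}$ conditioned on the tagged BS collapses to the Laplace functional of total interference-plus-noise evaluated at $s$, with $\AntGain{\pu}{1}$ the main-lobe gain of the steered tagged beam and $u^{\alpha_{t_0}}$ the effective path loss.

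First I would substitute the \primary interference Laplace transform from the (unlabeled) Lemma giving $\laplace{I'_{\pu}}$, namely $\laplace{I'_\pu}(s)=\exp(-\lambdaPT\sum_{k=1}^2 b_k E_\pu(s\AntGain{\pu}{k},u^{\alpha_{t_0}}))$; setting $s=\SThres u^{\alpha_{t_0}}/\AntGain{\pu}{1}$ turns this into the $E_\pu(\SThres u^{\alpha_{t_0}}\AntGain{\pu}{k}/\AntGain{\pu}{1},u^{\alpha_{t_0}})$ factor of the statement, with $b_1=\theta_{\pu\beam}/(2\pi)$, $b_2=1-b_1$ weighting the main and side lobes of the interfering \primary beams. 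Next I would substitute the \secondary interference Laplace transform from Lemma \ref{lemma:PriSI}, $\laplace{I_\su}(s)=\exp(-\lambdaST\sum_{k=1}^2 a_k(E_\foreign(s\AntGain{\su}{k},\IT)+E_\native(s\AntGain{\su}{k})))$, again at $s=\SThres u^{\alpha_{t_0}}/\AntGain{\pu}{1}$, producing the two remaining exponential factors. Finally I would merge every exponential into a single exponent, keep $2\pi\lambdaPT\imdfunP{t_0}{u}$ and the $u\,\mathrm{d}u$ outside, and read off exactly the claimed expression.

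The genuinely substantive content is already carried by the cited lemmas, so the only obstacle at the level of the theorem is careful bookkeeping: one must track the distinct lobe weightings $(a_k,b_k)$ attached to each network's beams and, above all, preserve the \secondary interference split into its \emph{foreign} part $I_{\foreign}$, from \secondary BSs not native to the typical \primary user, and its \emph{native} part $I_{\native}$, from the BSs in $\mathcal{N}_0$. This decomposition is the single place that demands care, because the native \secondary BSs have transmit powers tied to their distance from the typical \primary user and hence do not form a homogeneous independently marked process; Lemma \ref{lemma:PriSI} already resolves this by supplying the separate functions $E_\foreign$ and $E_\native$, and the theorem simply inherits that split. With those pieces fixed, the remaining steps are algebraic regrouping of the exponents, and I expect no further difficulty.
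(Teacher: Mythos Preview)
Your proposal is correct and follows exactly the paper's approach: the paper obtains Theorem~\ref{thm:PPc} simply by substituting the Laplace transforms of $I'_\pu$ and $I_\su$ (from the preceding two lemmas) into Lemma~\ref{lemma:PSINRCoverageStep1} at $s=\SThres u^{\alpha_{t_0}}/\AntGain{\pu}{1}$ and regrouping the exponentials, with no additional probabilistic argument. Your emphasis on the foreign/native split of the secondary interference and the lobe weightings $(a_k,b_k)$ is appropriate, but these are already fully handled by Lemma~\ref{lemma:PriSI}, so---as you note---the theorem itself is pure bookkeeping.
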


\noindent\textbf{Special Cases:} Similar to the secondary case, consider a mmWave network with identical parameters for LOS and NLOS channels. For this case, $\imdfunP{t}{u}$ and $\imdfuni{t}{u}$ are replaced by constant $M$ and $K$.
Now, $E_{\pu}(B,e)$, $E_{\foreign}(B)$ and $E_{\foreign}(B,\IT)$ can be simplified as follows:
\iftoggle{SC}{
\begin{align*}
\iftoggle{SC}{}{&}E_{\pu}(B,e)=
\pi M B^{\frac2{\alpha}}\rho(\alpha,B/e),\text{ }\DCPbreak
E_{\foreign}(B,\IT)&=(\IT B)^{\frac2\alpha}\pi K \rho(\alpha,\IT B ),\text{ }
E_{\native}(B)=\frac{\pi K}{1+(B\IT)^{-1}}\\
\iftoggle{SC}{\text{Now, \hspace{1.2in}}}{\text{Now,}\nonumber\\} E_{\pu}\left(\frac{\SThres u^{\alpha} }{ \AntGain{\pu}{1} }  \AntGain{\pu}{k} ,u^{{\alpha_{t_0}}}\right)&=
\pi M u^2 \SThres ^{\frac2{\alpha}}\left(\frac{\AntGain{\pu}{k}}{ \AntGain{\pu}{1} }\right)^{\frac2{\alpha}}\rho\left(\alpha,\frac{\SThres \AntGain{\pu}{k}}{ \AntGain{\pu}{1} }  \right),\\
E_{\foreign}\left(\frac{\SThres u^\alpha \AntGain{\su}{k}}{ \AntGain{\pu}{1} } ,\IT\right)&=\pi K \SThres^{\frac2{\alpha}} u^2\IT ^{\frac2{\alpha}} \left( \frac{\AntGain{\su}{k}}{ \AntGain{\pu}{1} } \right)^{\frac2\alpha}
\rho\left(\alpha,\IT \frac{\SThres u^\alpha \AntGain{\su}{k}}{ \AntGain{\pu}{1} } \right),\\
 E_{\native}\left(\frac{\SThres u^\alpha\AntGain{\su}{k}}{ \AntGain{\pu}{1} } \right)&=\frac{\pi K}{1+(\frac{\SThres \AntGain{\su}{k}}{ \AntGain{\pu}{1} } \IT)^{-1}u^{-\alpha}}
\end{align*}}
{\begin{align*}
E_{\pu}(B,e)&=
\pi M B^{\frac2{\alpha}}\rho(\alpha,B/e),\text{ }\\
E_{\foreign}(B,\IT)&=(\IT B)^{\frac2\alpha}\pi K \rho(\alpha,\IT B ),\\
E_{\native}(B)&=\frac{\pi K}{1+(B\IT)^{-1}}
\end{align*}
Now,
\begin{align*}
 &E_{\pu}\left(\frac{\SThres u^{\alpha} }{ \AntGain{\pu}{1} }  \AntGain{\pu}{k} ,u^{{\alpha_{t_0}}}\right)=
\pi M u^2 \SThres ^{\frac2{\alpha}}\left(\frac{\AntGain{\pu}{k}}{ \AntGain{\pu}{1} }\right)^{\frac2{\alpha}}\rho\left(\alpha,\frac{\SThres \AntGain{\pu}{k}}{ \AntGain{\pu}{1} }  \right),\\
&E_{\foreign}\left(\frac{\SThres u^\alpha \AntGain{\su}{k}}{ \AntGain{\pu}{1} } ,\IT\right)=\pi K u^2 \left(\tau\IT \frac{\AntGain{\su}{k}}{ \AntGain{\pu}{1} } \right)^{\frac2\alpha}
\rho\left(\alpha,\IT \frac{\SThres u^\alpha \AntGain{\su}{k}}{ \AntGain{\pu}{1} } \right),\\
 &E_{\native}\left(\frac{\SThres u^\alpha\AntGain{\su}{k}}{ \AntGain{\pu}{1} } \right)=\frac{\pi K}{1+(\frac{\SThres \AntGain{\su}{k}}{ \AntGain{\pu}{1} } \IT)^{-1}u^{-\alpha}}
\end{align*}}
Then, the  coverage probability is given as
\iftoggle{SC}{
\begin{align*}
\Pc^\mathrm{c}_\mathrm{S}(\SThres)&=2\pi\lambdaPT M
\int_0^\infty\expU{-\frac{\SThres {u}^{\alpha} }{{{\IT}}\AntGain{\su}{1} }\NoiseS
-u^2 \pi \SThres ^{\frac2{\alpha}}\left(
		\lambdaPT \sum_{k=1}^2 b_k
		 M  \left(\frac{\AntGain{\pu}{k}}{ \AntGain{\pu}{1} }\right)^{\frac2{\alpha}}\rho\left(\alpha,\frac{\SThres \AntGain{\pu}{k}}{ \AntGain{\pu}{1}}  \right)\right)}\\
		 &\expU{-u^2 \pi \SThres ^{\frac2{\alpha}}\left(
		 \lambdaST \sum_{k=1}^2 a_k
		K  \IT ^{\frac2{\alpha}} \left( \frac{\AntGain{\su}{k}}{ \AntGain{\pu}{1} } \right)^{\frac2\alpha}
\rho\left(\alpha,\IT \frac{\SThres u^\alpha \AntGain{\su}{k}}{ \AntGain{\pu}{1} } \right)
	\right)
	-\pi \lambdaST \sum_{k=1}^2 a_k
	\fracS{K}{\left(1+\left(\frac{\SThres \AntGain{\su}{k}}{ \AntGain{\pu}{1} } \IT\right)^{-1}u^{-\alpha}\right)}	
	-	\pi\lambdaPT Mu^2}
u\intd u.
\end{align*} }
{\begin{align*}
\Pc^\mathrm{c}_\mathrm{S}(\SThres)&=2\pi\lambdaPT M
\int_0^\infty\expU{-\frac{\SThres {u}^{\alpha} }{{{\IT}}\AntGain{\su}{1} }\NoiseS}\\
&\expU{-u^2 \pi \SThres ^{\frac2{\alpha}}\left(
		\lambdaPT \sum_{k=1}^2 b_k
		 M  \left(\frac{\AntGain{\pu}{k}}{ \AntGain{\pu}{1} }\right)^{\frac2{\alpha}}\rho\left(\alpha,\frac{\SThres \AntGain{\pu}{k}}{ \AntGain{\pu}{1}}  \right)\right)}\\
		 &\expU{-u^2 \pi \SThres ^{\frac2{\alpha}}\left(
		 \lambdaST \sum_{k=1}^2 a_k
		K  \IT ^{\frac2{\alpha}} \left( \frac{\AntGain{\su}{k}}{ \AntGain{\pu}{1} } \right)^{\frac2\alpha}
\rho\left(\alpha,\IT \frac{\SThres u^\alpha \AntGain{\su}{k}}{ \AntGain{\pu}{1} } \right)
	\right)}\\
	&\expU{
	-\pi \lambdaST \sum_{k=1}^2 a_k
	\fracS{K}{\left(1+\left(\frac{\SThres \AntGain{\su}{k}}{ \AntGain{\pu}{1} } \IT\right)^{-1}u^{-\alpha}\right)}	
	-	\pi\lambdaPT Mu^2}
u\intd u.
\end{align*}}

Assuming similar assumptions for the \primary and \secondary antennas as taken in the \secondary case, we can get insights about how antenna beamwidth affects the primary performance.


\textbf{Impact of narrowing \primary antenna beamwidth:}
The term denoting \secondary interference decreases with $\nAntP$ as $u^2c^{\frac2\alpha}\rho\left(\alpha,u^\alpha\frac{c}\nAntP\right)\frac{\kappa}{\nAntP^{2/\alpha}}+\frac{1}{1+\nAntP u^{-\alpha}/c}$ ($\approx\frac{1}{\nAntP}\frac{c}{u^\alpha}$ as $\nAntP\rightarrow\infty$). Here, $c$ is some variable independent of $\nAntP$. Similarly, the term denoting the \primary interference decreases with $\nAntP$ as $\frac{\kappa}\nAntP\rho(\alpha,\SThres)+\left(\frac{1-\kappa}{\nAntP}\right)^{2/\alpha}\rho\left(\alpha,\SThres\frac{1-\kappa}{\nAntP}\right)$. Therefore, narrowing the \primary antennas beamwidth improves the \primary performance significantly.

\textbf{Impact of narrowing \secondary antenna beamwidth:} Here, the term denoting the \secondary interference changes with $\nAntS$ as 
$u^2d^{\frac2\alpha}\rho\left(\alpha,u^\alpha\nAntS{d}\right)\frac{\kappa}{\nAntS^{1-2/\alpha}}+\frac{\kappa}{\nAntS}\frac{1}{1+ u^{-\alpha}/(\nAntS d)}
+u^2((1-\kappa)d)^{\frac2\alpha}\rho\left(\alpha,u^\alpha(1-\kappa){d}\right)+\fracS{1}{\left(1+ u^{-\alpha}/((1-\kappa)d)\right)}$, where $d$ is some variable independent of $\nAntS$. The term denoting the \primary interference remains unchanged with with $\nAntS$. Therefore, narrowing the \secondary beamwidth has very little affect on the \primary performance. 
  


\subsection{Rate Coverage for the Primary and Secondary Operators}

In this section, we derive the downlink rate coverage which is defined as the probability that the rate of a typical user is greater than the threshold $\RThres$, 
$\mathrm{R}^\mathrm{c}(\RThres)=\prob{\mathrm{Rate}>\RThres}$. 

Let $O_\su~(\text{or} ~O_\pu)$ denote the time-frequency resources allocated to each user associated with the `tagged' BS of a \secondary user (or a \primary user). The instantaneous rate of the considered typical \secondary user can then be written as $R_{\su}=O_\su\log{\left(1+\SINR_{\su}\right)}$.
 The value of $O_\su$  depends upon the
number of users ($n_\su$), equivalently the load, served by the tagged BS.
The load  $n_\su$ is a random variable due to the randomly sized coverage areas of each BS and random number of users in the coverage areas. As shown in \cite{SinDhiJ2013,SinghBackHaul2015}, approximating this load with its respective mean does not compromise the accuracy of results. 
Since the user distribution of each network is assumed to be PPP, the average number of users associated with the tagged  BS of each networks associated with the typical user can be modeled similarly to \cite{SinDhiJ2013,SinghBackHaul2015}:
$n_\su=1+1.28\frac{\lambdaSR}{\lambdaST}$ and $n_\pu=1+1.28\frac{\lambdaPR}{\lambdaPT}$.
Now, we assume that the scheduler at the tagged BS gives  $1/n$ fraction of resources to each user. This assumption can be justified as most schedulers such as round robin or proportional fair  give  approximately  $1/n_\su$ (or $1/n_\pu$) fraction of resources to each user on average.
Using the mean load approximation, the instantaneous rate of a typical \secondary user which is associated with BS at $\y_0$ is given as
\begin{align}
R_\su&=\frac{W}{n_\su}\log{\left(1+\SINR_\su\right)}\label{eq:InRateExp}.
\end{align} 
Now, $\Rc^\mathrm{c}_\su(\RThres)$ and $\Rc^\mathrm{c}_\pu(\RThres)$ can be derived in terms of coverage probability as follows:
\begin{align}
\Rc^\mathrm{c}_\su(\RThres)&=\prob{R_\su>\RThres}=\prob{\frac{W}{n_\su}\log{(1+\SINR_\su)}>\RThres}\nonumber\\
&=\prob{\SINR_\su>2^{\RThres\frac{n_\su}{W}}-1}=\Pc_\su^\mathrm{c}\left(2^{\RThres n_\su/W}-1\right),\nonumber\\
\Rc^\mathrm{c}_\pu(\RThres)&=\Pc_\pu^\mathrm{c}\left(2^{\RThres n_\pu/W}-1\right)\label{eq:Rc}.
\end{align}

\newcommand{\ase}{median rate }
\newcommand{\SumR}[1]{\mathcal{R}_{#1}}

 We now define the \ase which works as a proxy to the network performance. 

\begin{definition} Let $\mathcal{B}$ denote a region with unit area. The \ase $\SumR{}$ of an operator is define as the sum of the median rates of all the users served in $\mathcal{B}$, which is
\begin{align}
\SumR{}=\expect{\sum_{u\in\PPR\cap\mathcal{B}}\mathbb{M}_u\left[R\right]}
\end{align}
where $\mathbb{M}_u\left[R\right]$ is the median rate of the user at $u$.
 \end{definition}

 From the stationarity of the user PPP, 
 \begin{align}
\SumR{}=\expect{\sum_{u\in\PPR\cap\mathcal{B}}\mathbb{M}_u\left[R\right]}=\lambdaR\int_{\mathcal{B}}\mathbb{M}^0\left[R\right]\intd u=\lambdaR\mathbb{M}^0\left[R\right]
\end{align}
where $\mathbb{M}^0$ denotes the median rate at the origin  under Palm ({\em i.e.} conditioned on the fact that there is a user at 0). Note that this is equal to the rate threshold where rate coverage of the typical user at the origin is 0.5.
Let ${(\Pc^\mathrm{c})}^{-1}(\cdot)$ denote the inverse of ${\Pc^\mathrm{c}}(\cdot)$. Now using \eqref{eq:Rc}, we can compute the \ase of the \primary and \secondary operators as follows:
\begin{align}
\SumR{\pu}&=W\frac{\lambdaPR}{1+1.28\fracS{\lambdaPR}{\lambdaPT}}\log\left(1+{(\Pc_\pu^\mathrm{c})}^{-1}\left(0.5\right)\right)\\
\SumR{\su}&=W\frac{\lambdaSR}{1+1.28\fracS{\lambdaSR}{\lambdaST}}\log\left(1+{(\Pc_\su^\mathrm{c})}^{-1}\left(0.5\right)\right).
\end{align}

\section{License Pricing and Revenue Model} \label{sec:Licensing}
In this section, we present the utility model, and describe the general license pricing and revenue functions. We assume a centralized  licensing model in which a central entity, such as FCC, has a control over the licensing for the \primary and \secondary operators. Therefore, even though the primary operator has an "exclusive-use" license, the decision to sell a restricted license to a secondary operator is taken by both the primary operator and the central licensing authority. These two entities will also share the revenue of the restricted secondary license. 

Let  $\PRevFun(\SumR{\pu})$ define the per-unit-area revenue function of the \primary operator from its own users when it provides a sum rate of $\SumR{\pu}$.  Similarly, we define the \secondary revenue  function  $\SRevFun(\cdot)$  that models the revenue of the \secondary network from its users. One special case is the linear mean revenue function, which is given  as follows
\noindent\begin{align}
\PRevFun(\SumR{\pu})&=\PRevConst\SumR{\pu},&
\SRevFun(\SumR{\su})&=\SRevConst\SumR{\su},
\end{align}
with $\PRevConst$ and $\SRevConst$ representing the linear \primary and \secondary revenue constants.

To characterize the licensing cost, we assume the licenses are given on a unit area region basis. Let the \primary licensing function $\PLicFun(\SumR{\pu})$  denote the license price paid by the \primary to central entity when it provides the \ase of  $\SumR{\pu}$ to its users. Similarly, we define \secondary licensing  function  $\SLicFun(\cdot)$ which denotes the price paid by the \secondary operator to the central entity. We also assume that \secondary operator has to pay some license price to the \primary operator as an incentive to let it use the \primary license band which is given as $\SPLicFun(\SumR{\su})$. We also define a special case as linear licensing function where the licensing cost paid by the \primary and the \secondary operators to the central entity and by the \secondary operator to the \primary operator are given as
\begin{align}
\PLicFun(\SumR{\pu})&=\PLicConst\SumR{\pu},&
\SLicFun(\SumR{\su})&=\SLicConst\SumR{\su},\iftoggle{SC}{&}{\nonumber\\}
\SPLicFun(\SumR{\su})&=\SPLicConst\SumR{\su}.\nonumber
\end{align}

The utility function of an entity is defined by its total revenue which for the three entities is given as follows:
\begin{align} 
\PUtFun(\SumR\pu)&=\PRevFun(\SumR\pu)-\PLicFun(\SumR\pu)+\SPLicFun(\SumR\su), \label{eq:utilities_p}\\
\SUtFun(\SumR\su)&=\SRevFun(\SumR\su)-\SLicFun(\SumR\su)-\SPLicFun(\SumR\su), \label{eq:utilities_s}\\
\CUtFun(\SumR\su)&=\PLicFun(\SumR\pu)+\SLicFun(\SumR\su). \label{eq:utilities_c}
\end{align}

Note that the \secondary \ase depends on the maximum interference limit $\IT$. By increasing this limit, \secondary network can increase its \ase for which it has to pay more to central entity and the \primary operator. Increasing this limit, however, decreases the \primary \ase which impacts the \primary network revenue from its own users. Therefore, there exists a trade-off when varying the interference limit $\IT$.

\section{Simulation Results and Discussion} \label{sec:Results}

In this section, we provide numerical results computed from the analytical expressions derived in previous sections, and draw insights into the performance of restricted \secondary licensing in mmWave systems. For these numerical results, 
we adopt an exponential blockage model, i.e., the LOS link probability is determined by $p_\L(x)=\exp(-x/\beta)$, with a LOS region $\beta=150$m. The LOS and NLOS pathloss exponents are $\alpha_\mathrm{L}=2.5$ and $\alpha_\mathrm{N}=3.5$, and the corresponding gains are $C_\mathrm{L}=C_\mathrm{N}=-60$dB. Unless otherwise mentioned, the \primary network has an average cell radius of $100$m, which is equivalent to a BS density of $\approx 30/\text{km}^2$. The transmit power of the \primary BSs is $40$dBm, while the transmit power of each \secondary BS is determined according to \eqref{eq:Sec_Power} to ensure that its average interference on its home \primary user in less than the threshold $\IT$. Both networks operate at $28$GHz carrier frequency over a shared bandwidth of $500$MHz. Note that the noise power at the BS is $-110$dB. Therefore, if  $\IT$ is between $-110$dB and $-120$dB, the secondary interference will be in the order of the noise. For the antenna patterns, the \primary and \secondary BSs employ a sectored beam pattern models as described in Section \ref{subsec:Channel_Model}. First, we verify the the derived analytical results for the \primary and \secondary coverage probabilities, before delving into the spectrum sharing rate and utility characterization.

\subsection{Coverage and Rate Results} \label{subsec:sim_cov} 
\begin{figure}[t!]
	\centering
	\includegraphics[width=\scf\columnwidth,trim=00 0 0 0,clip=true]{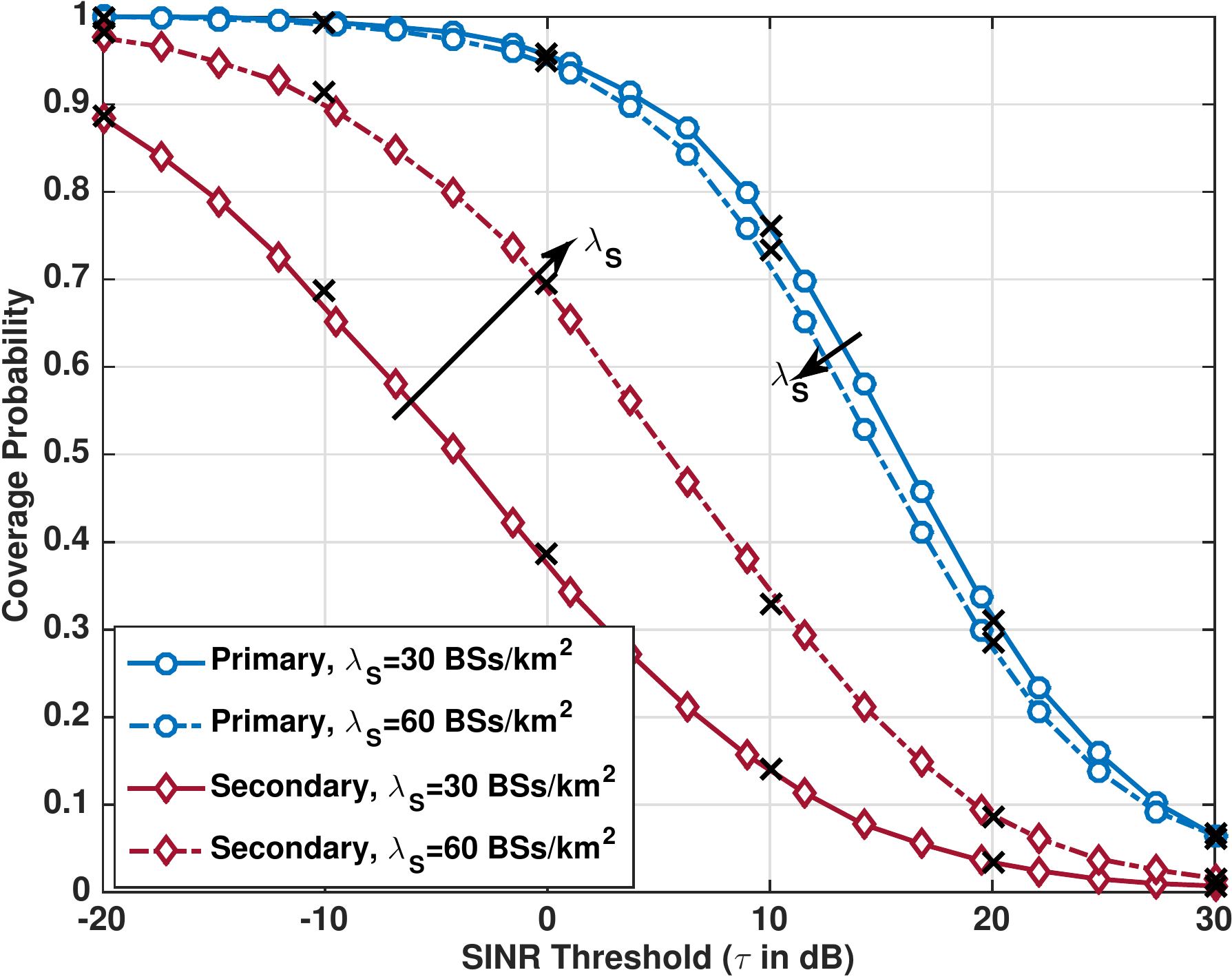}
	\caption{Coverage probabilities of the  two licensees for two different values of $\lambda_\su$ with $\IT=-120$dB. The  secondary can  improve its coverage by choosing an appropriate density without impacting the primary coverage.}
	\label{fig:cov_impactdensity}
\end{figure}

Since the \secondary operator shares the same time-frequency resources with the primary, it is important to characterize the impact of sharing on the primary performance. In this subsection, we evaluate the coverage and rate for both operators, and study the impact of secondary network's densification and narrowing the beamforming beams on the performance of the two networks. 


\textbf{Validation of analysis and impact of the \secondary densification:}  Fig. \ref{fig:cov_impactdensity} shows the coverage probabilities of  both operators for two different values of the secondary density, $\lambdaST=30$ BSs/km$^2$ and $\lambdaST=60$ BSs/km$^2$. The density of the primary BSs is fixed at $\lambdaPT=30$ BSs/km$^2$ and the maximum \secondary interference threshold is set to -120 dB. We can see that despite the various assumptions taken in the analysis, the analysis matches the simulations closely. An interesting note from Fig. \ref{fig:cov_impactdensity} is that increasing the secondary network density significantly improves the secondary network coverage while causing a negligible impact on the primary network performance. In particular, when $\lambdaST$ increases from 30 to 60 BSs/km$^2$, the median SINR of the secondary network increases from -4dB to 6dB while the median SINR of primary network decreases only by 2 dB. This indicates that in mmWave, both primary and secondary can achieve significant coverage probability by selecting appropriate values of $\IT$ and BS densities.


\begin{figure}[t!]  
	\centering
	\includegraphics[width=\scf\columnwidth,trim=0 0 0 0,clip]{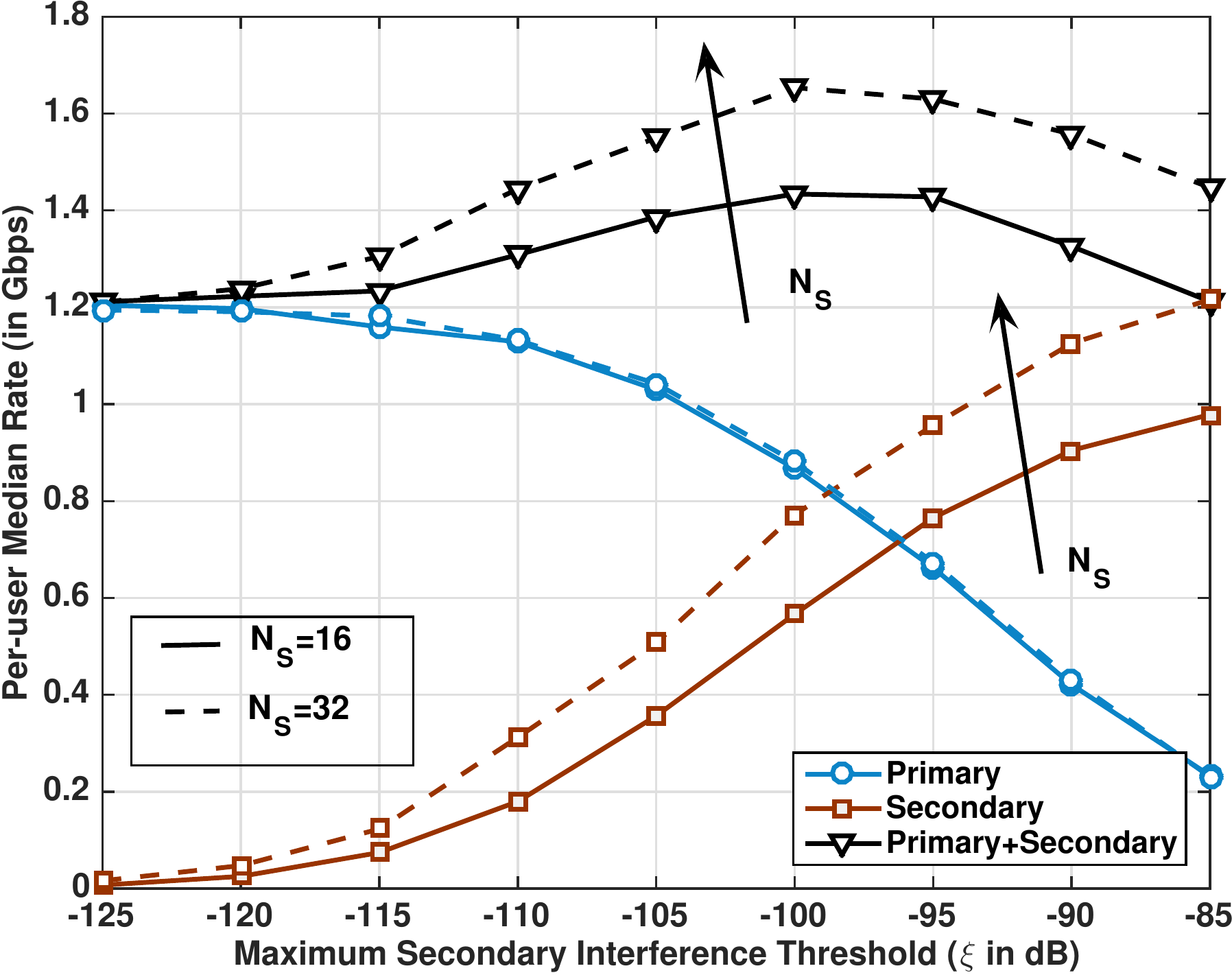}
	\caption{Sum median rate of the primary and secondary networks as well as the total sum median rate versus $\IT$. The  networks have equal density of 56 BS/km$^2$, which corresponds to an average cell radius of $75$m.}
	\label{fig:asevsIT}
\end{figure}

\textbf{Impact of the \secondary antenna beamwidth:} One important feature of mmWave systems is their ability to use large antennas arrays and narrow directional beams. To examine the impact of antenna beamwidth, we plot the median per-user rate of both the primary and secondary networks along with their sum-rate for two different values of number of secondary antennas in Fig. \ref{fig:asevsIT}. These rates are plotted versus the secondary interference threshold $\IT$. First, Fig. \ref{fig:asevsIT} shows that the secondary network performance improves as the number of its BS antennas increase (or equivalently as narrower beams are employed). Another interesting note is that the primary performance is almost invariant of the secondary antennas beamwidth. This means that the secondary network can always improve its performance by employing narrower beamforming beams without impacting the primary performance. This will also lead to an improvement in the overall system performance. Finally, we note that for every secondary BS beamwidth, there exists a finite value for the interference threshold $\IT$ at which the sum-rate is maximized. Therefore, this threshold need to be wisely adjusted for the spectrum sharing network based on the different network parameters to guarantee achieving the best performance.


\begin{figure}[t!]
	\centering
	\includegraphics[width=\scft\columnwidth]{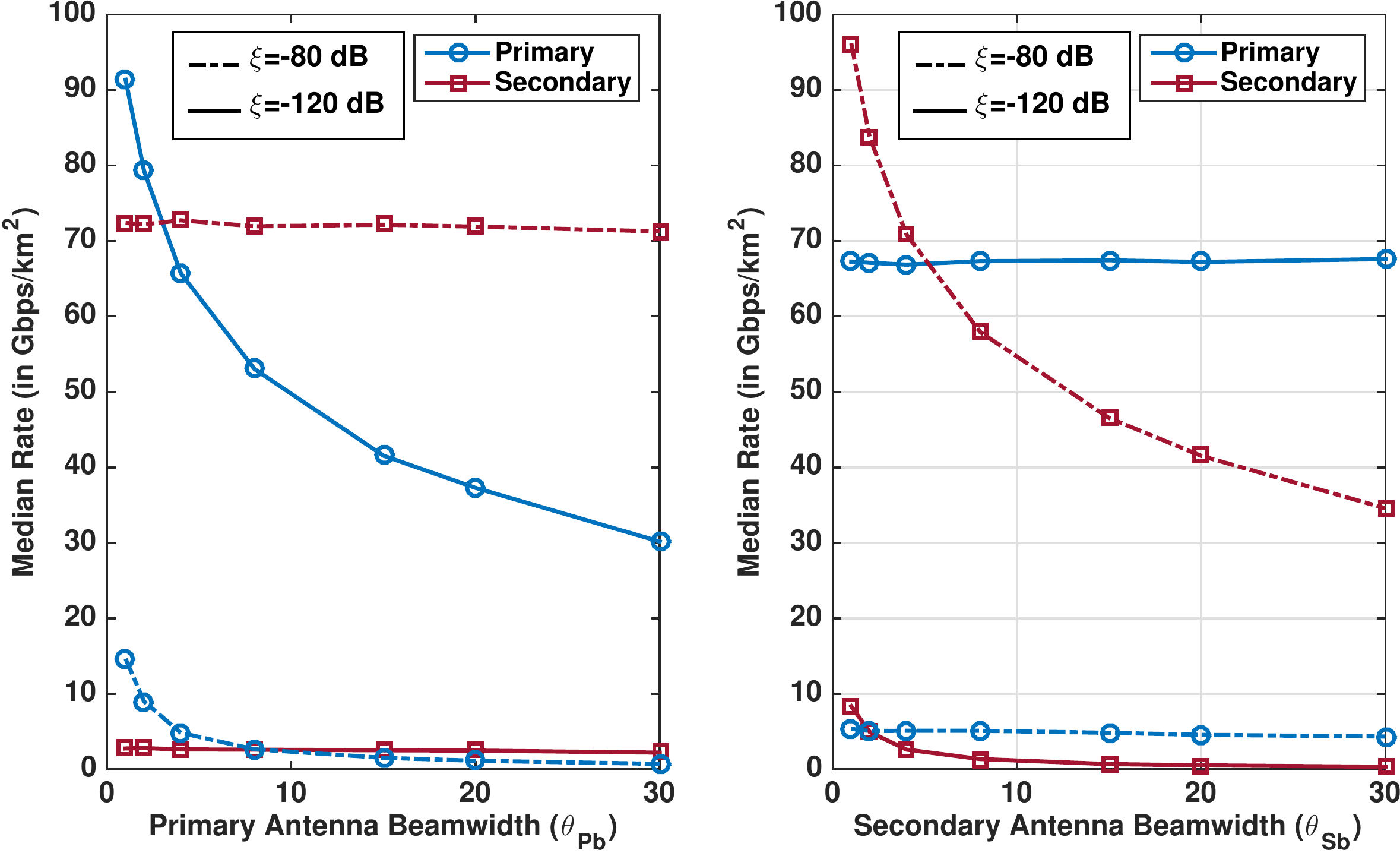}
	\caption{Effect of \primary and \secondary antenna beamwidth over primary and secondary operators for two values of interference threshold $\IT$. Both operators have equal density of 60 BSs/km$^2$. Secondary antenna beamwidth  significantly improves its own performance but does not impact primary performance. Therefore, secondary antennas can be made narrow to get high rates without causing additional interference on primary. Similar trends can also be observed for primary antennas. }
	\label{fig:Rate_BW}
\end{figure}

To verify the insights drawn from the analytical expressions about narrowing the \primary and \secondary beamforming beamwidth in Sections \ref{subsec:SecondaryCov} - \ref{subsec:PrimaryCov}, we plot the \primary and \secondary median rates versus the BS antenna beamwidth in Fig. \ref{fig:Rate_BW}. This figure shows the narrowing the beams of the BSs in one network (\primary or secondary) improves the performance of this network with almost no impact on the other network performance. This trend happens even with higher secondary interference threshold as depicted in Fig. \ref{fig:Rate_BW}.



\textbf{Comparison with uncoordinated spectrum sharing:}
Now, we compare the gain from restricted secondary licensing proposed in this paper over the uncoordinated spectrum sharing considered in \cite{GuptaAndHeath2016}. We consider a scenario where two operators buy exclusive licenses to two different mmWave bands with equal bandwidth. The two operators decide to share their licenses in the following way: each operator is known as a primary in its own band and a secondary in the other operator’s band. In the restricted secondary licensing, each operator can transmit in other operator bands with the restriction on its transmit power. In the uncoordinated sharing, the two operators are allowed to  transmit in each other bands with no restriction. For simplicity, we assume that the two operators, in the uncoordinated sharing case, have the same transmit power. To have a fair comparison, we choose the transmit power in uncoordinated case such that the total power (sum of the transmit power of the two operators) is equal to the total power of the restricted secondary sharing case. Fig. \ref{fig:PSLicVsULic100} compares the median rates of  an operator achieved in its primary and secondary bands  as well as its aggregate median rate for the two sharing cases. 
First, this figure shows that restricted secondary licensing can achieve higher sum rates compared to uncoordinated sharing if the interference threshold is appropriately adjusted. The figure also indicates that the restricted licensing approach provides a mean for differentiating the access to guarantee that the primary user gets better performance in its band. This is captured by the higher rate of the primary operator in the restricted secondary licensing case compared to the primary rate in the uncoordinated sharing for wide range of $\IT$ values. 

In Fig. \ref{fig:PSLicVsULic115}, we show the impact of secondary network density ($\lambdaST$) on the gain of restricted secondary licensing over uncoordinated sharing. Fig.  \ref{fig:PSLicVsULic115} illustrates that increasing $\lambdaST$
 decreases the rate of the primary operator in two sharing approaches, which is expected. Interestingly, the degradation in the primary performance is smaller in the restricted licensing case which leads to higher overall gain compared to the uncoordinated sharing. This also means that the gain of restricted licensing over uncoordinated sharing increases in dense networks, which is particularly important for mmWave systems. In conclusion, the results in Fig. \ref{fig:PSLicVsULic100} - Fig. \ref{fig:PSLicVsULic115} indicate that static coordination is in fact beneficial for mmWave dense networks as it leads to higher rates and provides a way of differentiating the access between the spectrum sharing operators.

\iftoggle{SC}{
\begin{figure}[t!]
	\centering
	\subfigure[center][{}]{
		{\includegraphics[width=0.475\columnwidth,trim=0 0 0 0,clip=true]{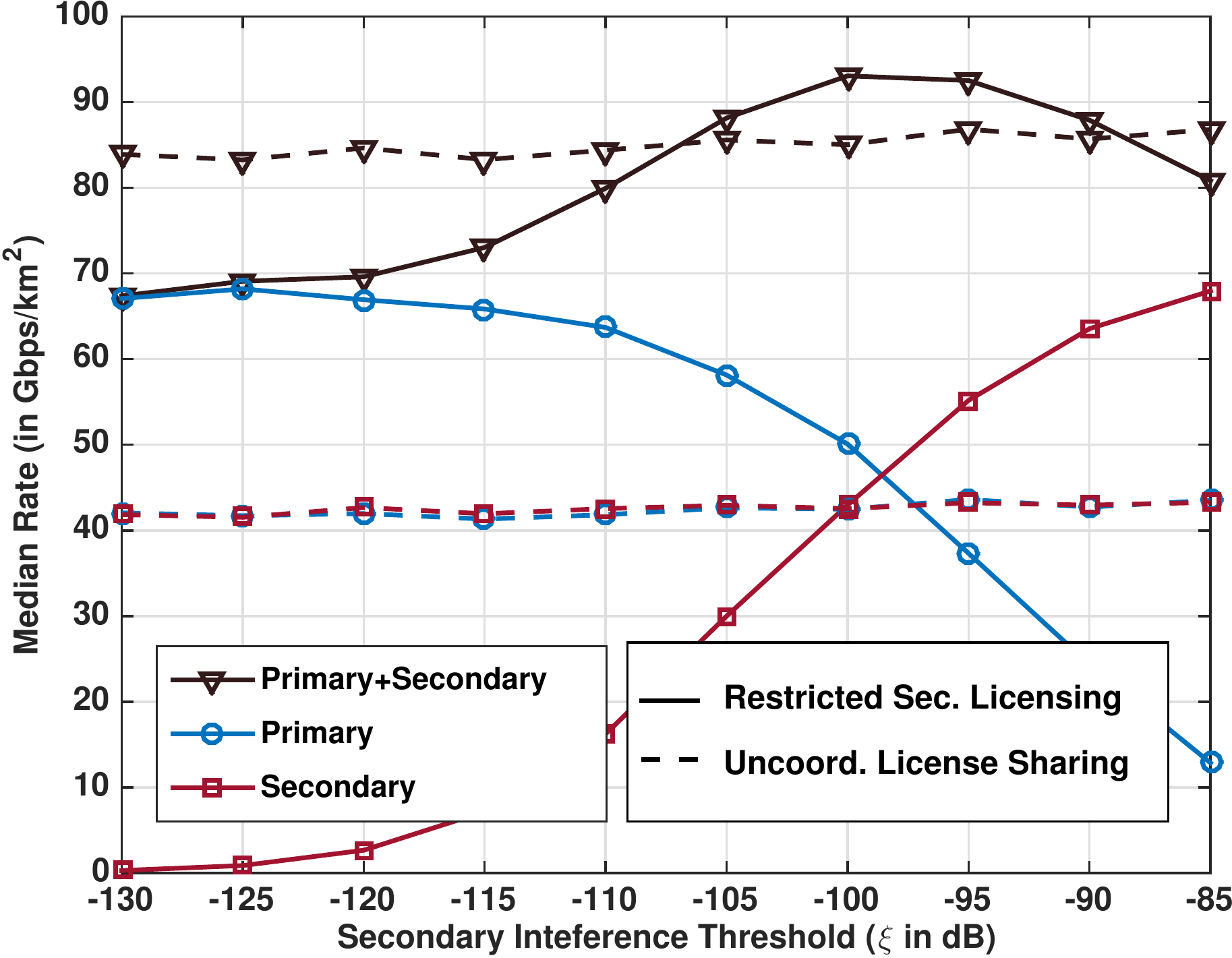}}

	\label{fig:PSLicVsULic100}}
	\subfigure[center][{}]{
		{\includegraphics[width=0.47\columnwidth,trim=0 8 0 0]{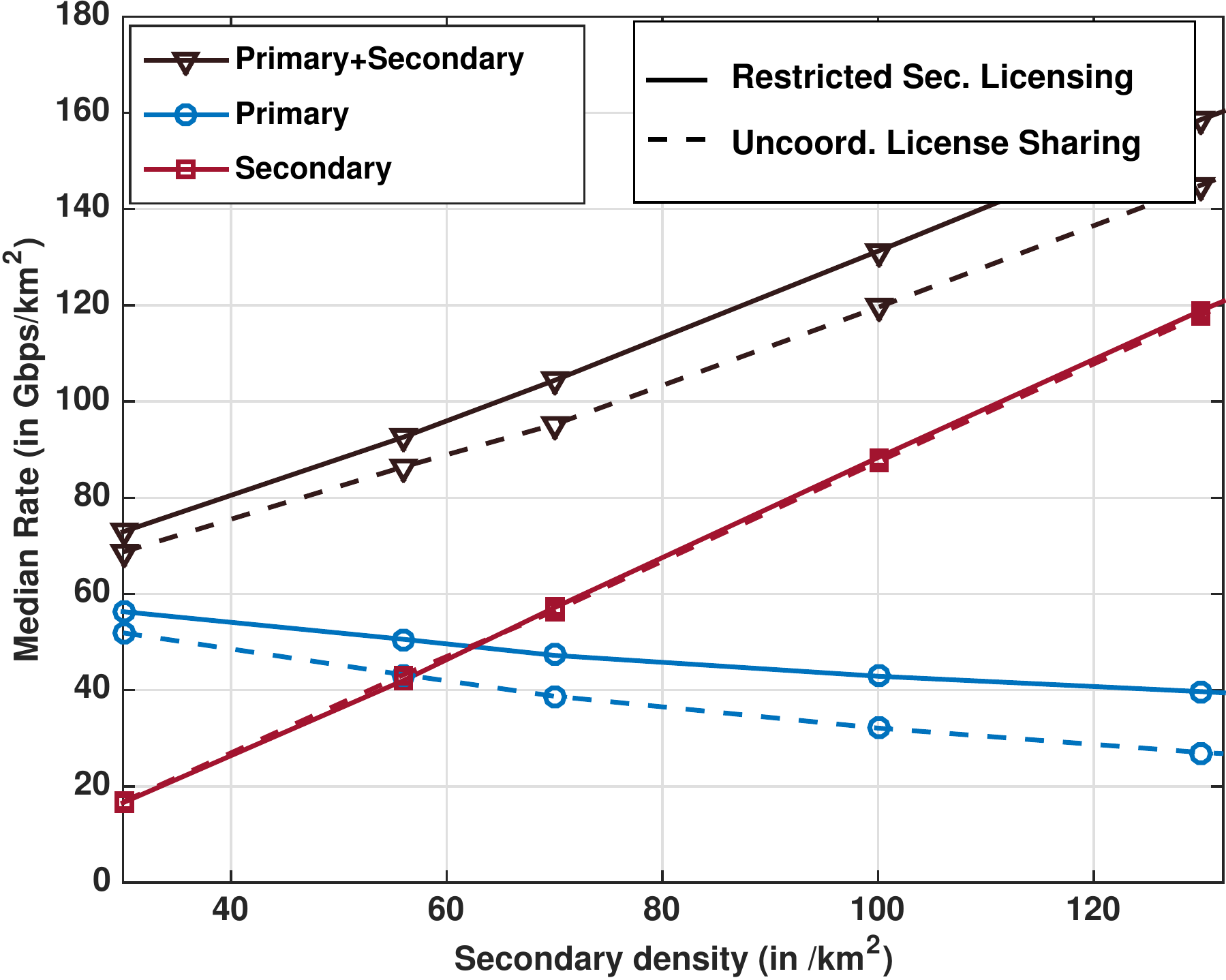}}
		\label{fig:PSLicVsULic115}}
	\caption{Comparison of restricted secondary licensing over uncoordinated sharing. (a) Variation of median rates of an operator in the primary and secondary bands and its sum median rate with $\IT$. Both operators have equal density of 60 BSs/km$^2$.  Secondary licensing can achieve higher sum rates compared to uncoordinated sharing if $\IT$ is appropriately adjusted. (b) Variation of median rates of an operator in the primary and secondary bands and its sum median rate with secondary density $\lambdaST$. Primary BS density is kept constant  at  60 BSs/km$^2$. The gain of restricted secondary licensing over uncoordinated sharing increases with $\lambdaST$.}
	\label{fig:Fig1x}
\end{figure} 
}
{\begin{figure}[t!]
	\centering
	\subfigure[center][{}]{
		{\includegraphics[width=0.9\columnwidth,trim=0 0 0 0,clip=true]{figs/Fig10PSLicVsULic.eps}}	
	\label{fig:PSLicVsULic100}}
	\subfigure[center][{}]{
		{\includegraphics[width=0.9\columnwidth,trim=0 8 0 0]{figs/Fig115PSLicVsULic.eps}}
		\label{fig:PSLicVsULic115}}
	\caption{Comparison of secondary licensing over uncoordinated sharing. (a) Variation of median rates of an operator in the primary and secondary bands and its sum median rate with $\IT$. Both operators have equal density of 60 BSs/km$^2$.  Restricted secondary licensing can achieve higher sum rates compared to uncoordinated sharing if the interference threshold is appropriately adjusted. (b) Variation of median rates of an operator in the primary and secondary bands and its sum median rate with secondary BS density. Primary BS density is kept constant  at  60 BSs/km$^2$. The gain of restricted secondary licensing over uncoordinated sharing increases with secondary density.}
	\label{fig:Fig1x}
\end{figure} 
}

\subsection{Primary and Secondary Utilities: The Benefits of Spectrum Sharing} \label{subsec:sim_utility}
In this subsection, we explore the potential gains of \secondary licensing in mmWave cellular systems. 
We adopt the pricing model from Section \ref{sec:Licensing}, with revenue constants $\PRevConst=1, \SRevConst=1$, and licensing cost constants $\PLicConst=0.25, \SLicConst=0.125$.

\begin{figure}[t!]
	\centering
	\includegraphics[width=\scf\columnwidth]{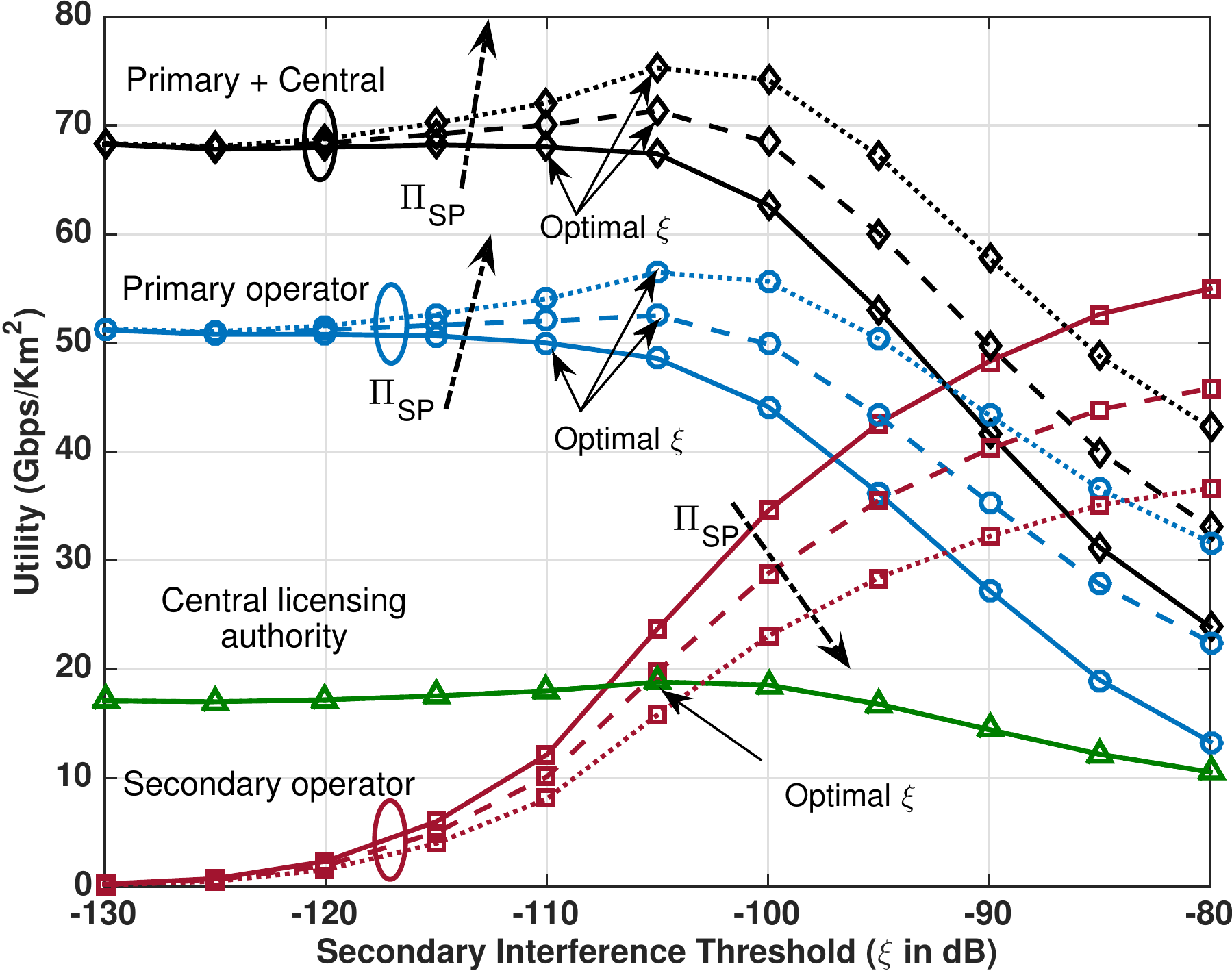}
	\caption{Utility of \primary network, the central licensing authority and the sum of the former two utilities as function of the maximum \secondary interference threshold $\IT$. The optimal threshold for the total utilities falls in between the optimal thresholds of utilities of the \primary and the central entity $\Pi_{SP}=.125,.25,.375$.}
	\label{fig:tot_utilities}
\end{figure}

\textbf{Gain of the primary network from restricted \secondary licensing:} In Fig. \ref{fig:tot_utilities}, we plot the utility functions of the \primary operator, the \secondary operator, and the central licensing authority, defined in \eqref{eq:utilities_p}-\eqref{eq:utilities_c}, versus the \secondary interference threshold  $\IT$ for three different values of the secondary-to-\primary licensing constant $\SPLicConst$. In this result, we consider a \primary network of density  $30/\text{km}^2$, and a \secondary network of density of $60/\text{km}^2$. First, the figure shows that increasing $\IT$ improves the \secondary operator utility which is expected. Interestingly, the utility of the \primary network does not always decrease with increase in $\IT$. The figure indicates that $\IT$ that maximizes the \primary network utility is finite, which means that the \primary network can actually benefit from the restricted \secondary licensing. The intuition is that the \secondary network needs to pay for its interference to the \primary network. As this interference increases, the money that the \primary  network gets from the restricted secondary licensing is more that its revenue from its own network. This underling trade-off normally yields an optimal value for the \secondary interference threshold that maximizes the \primary network utility. This means that the primary network has clear incentive to share its spectrum using restricted secondary licensing. 

\textbf{Joint optimization of the \primary and the central entity:} The utility of the central licensing authority remains constant for different values of $\SPLicConst$, which can be noted from \eqref{eq:utilities_p}-\eqref{eq:utilities_c}. As the value of $\IT$ that maximizes the utility of the central authority can be larger than that maximizing the \primary utility as shown in Fig. \ref{fig:tot_utilities}, the central licensing authority has the incentive to push the primary to share with more degradation than the primary would otherwise share. Fig. \ref{fig:tot_utilities}, also plots the total utility function which defined as the sum of the \primary and central licensing authority's utilities. Intuitively, the optimal threshold for the total utility falls in between the optimal thresholds of the \primary and central entity utilities. 




\section{Conclusion} \label{sec:Conc}
In this paper, we modeled a mmWave cellular system with a \primary operator that has an ``exclusive-use'' license with a provision to sell a restricted secondary license to another operator that has a maximum allowable interference threshold. This licensing approach provides a way of differentiating the spectrum access for the different operators, and hence is more practical. Due to this restriction on the secondary interference, though, the transmit power of a secondary BSs is a random variable. This required developing new analytical tools to analyze the network coverage and rate. Results showed that \secondary can achieve good rate coverage with a small impact on the \primary performance. Results also indicated that narrow beams and dense networks can further improve secondary network performance. Compared to uncoordinated sharing, we showed that a reasonable gain can be achieved with the proposed static coordinated sharing approach. Further, restricted secondary licensing can guarantee a certain spectrum access quality for the primary user, which is not the case in uncoordinated sharing. We also considered a revenue model for both operators in the presence of a central licensing authority. Using this model, we showed that the \primary operator can achieve good benefits from restricted secondary licensing, and hence has a good incentive to share its spectrum. Results also illustrated that the central licensing authority can get more gain with restricted secondary licensing. As the optimal interference thresholds for the central licensing and primary operators can be different, the central authority may push the primary operator to share with more degradation than the primary would otherwise share. Overall, the primary and secondary operators as well as the central licensing authority can benefit from restricted secondary licensing. For future work, it would be of interest to investigate how techniques like multi-user multiplexing affect the insights on restricted \secondary licensing. It is also important to explore how temporal variations in the traffic demands for the two operators impact the network performance.

\appendices
\section{Derivation of probability distribution of $R_i$'s}\label{app:RiDist}
Here, we compute the joint distribution of $R_i$ and $T_i=\L$. The proof for $T_i=\N$ is similar. Consider the $i^{th}$ \secondary BS. Now, the \primary user PPP can be divided into two independent  PPPs: $\Phi_{1\L}^\rx$ consisting of all  \primary user having LOS link to the $i^{th}$ \secondary BS and $\Phi_{1\L}^\rx$ with all  \primary user having NLOS link to the $i^{th}$ \secondary BS.  
 Now, let $R_{\L i}$ denote the distance of the closest \primary user in  ${\PPPR}_\L$ whose  distribution can computed as follows:
\begs\begin{align}
\prob{R_{\L i}>r}&=\exp\left(-\int_0^\infty \lambdaPR p_\L(r)2\pi r dr\right)=\expU{-\lambdaPR V_\L(r)}\nonumber\\
f_{R_{\L i}}(r)&=\frac{\intd}{\intd r} \prob{R_{\L i}>r}=2\pi \lambdaPR p_\L(r)r\exp(-\lambdaPR V_\L(r))\nonumber
\end{align}\ens
where the first step is from the void probability of the non-homogenous PPP  ${\PPPR}_{\L}$. Similarly the distance distribution of the closest \primary user in ${\PPPR}_{\N}$ can also be computed.
Now, the joint probability of the event $R_i>r$ and the event that 
 $\mathcal{H}_i$
 is a LOS BS ({\em i.e.} $\LinkHome{i}=\L$) is computed as
\begs\begin{align}
&\prob{R_i>r,\LinkHome{i}=\L}=\int_r^\infty f_{R_{\L i}}(u)\prob{C_\N R_{\N i}^{-\alpha_\N}<C_\L u^{-\alpha_\L}}\intd u \DCPbreak \nonumber
=\int_r^\infty f_{R_{\L i}}(u)\prob{R_{\N i}> \left(\frac{C_\N}{C_\L}\right)^{\frac1{\alpha_\N}} u^{\frac{\alpha_\L}{\alpha_N}}}\intd u
\\&=\int_r^\infty f_{R_{\L i}}(u)\prob{R_{\N i}> \exclusionfun{\L}{\N}(u)}\intd u
\DCPbreak=\int_r^\infty 2\pi \lambdaPR p_\L(u)r\exp(-\lambdaPR V_\L(u)-\lambdaPR V_\N(\exclusionfun{\L}{\N}(u)))\intd u\nonumber
\end{align}\ens
where the last step is from the void probability of ${\PPPR}_\N$.
Therefore, the joint distribution can be computed as follows:
\begs\begin{align}
&f_{R_i}(r,\LinkHome{i}=\L)=\frac{\intd}{\intd r} \prob{R_i>r,\LinkHome{i}=\L}\DCPbreak=2\pi \lambdaPR p_\L(r)r\exp(-\lambdaPR V_\L(r)-\lambdaPR V_\N(\exclusionfun{\L}{\N}(r))).\nonumber
\end{align}\ens

\section{Proof of Lemma 1}\label{app:SSINRCoverageStep1}
Let $\Phi$ be an arbitrary PPP. Now let us assign to each $i^{th}$ \secondary BS, a mark $e(\y_i,\Phi)$ as indicator of $\y_i$ being selected as serving BS from  $\Phi$ and another mark $S(\y_i,\Phi)$ as SINR at $\mathrm{SU}_0$ if BS at $\y_i$ is selected for serving and interferers are from $\Phi$, 
\begs\begin{align}
e(\y_i,\Phi)&=\ind{}\left(\frac{\powerS{i}C_{\LinkSS{i}}}{\|\y_i\|^{\alpha_{\LinkSS{i}}}}>\frac{\powerS{j}C_{\LinkSS{j}}}{\|\y_j\|^{\alpha_{\LinkSS{j}}}} \forall j, \in \Phi\right)
\iftoggle{SC}{,&}{\nonumber\\}
S(\y_i,\Phi)&=\frac{\AntGain{\su}{1} \ChannelSS{i} \powerS{i} C_{\LinkSS{i}}\|\y\|^{-\alpha_{\LinkSS{i}}}}{I_\pu+I_\su(\Phi)+\NoiseS}
\end{align}\ens
\noindent
\begin{flalign*}
&\text{where } I_\pu=\sum\limits_{\x_j\in\PPPT} \powerP{}\AntGain{\pu}{}(\theta_j)\ChannelPS{j}C_{\LinkPS{j}} x_j^{-\alpha_{\LinkPS{j}}}, \text{ and } 
\DCPbreak I_\su(\Phi)=\sum\limits_{\y_j\in\Phi} \powerS{j}\AntGain{\pu}{}(\omega_j)\ChannelSS{j}C_{\LinkSS{j}} y_j^{-\alpha_{\LinkSS{j}}}.\hspace{0.7in}
\end{flalign*}
Using the above two indicators, the coverage probability of \UES~can be written as
\begs\begin{align}
\Pc^\mathrm{c}_\mathrm{S}(\SThres)=&\sum\nolimits_{t_0\in\{\L,\N\}}\expects{}{\sum\nolimits_{\y_i\in \Phi_{2t_0}}\ind{}(S(\y_i,\PPST\setminus \y_i)>\SThres,
\DCPbreakI \vphantom{\sum\nolimits_{\y_i\in \Phi_{2t_0}}} e(\y_i,\PPST\setminus\y_i)=1)}\label{eq:Pcalternate}.
\end{align}\ens
This is due to the fact that $e(\y_i,\PPST\setminus\y_i)$ can be 1 only for one BS that is at $\y_0$, therefore  \eqref{eq:Pcalternate} will give the  coverage probability provided by BS at $\y_0$. \eqref{eq:Pcalternate} can be further written as $\iftoggle{SC}{}{\Pc^\mathrm{c}_\mathrm{S}(\SThres)}$
\begs\begin{align}
\iftoggle{SC}{\Pc^\mathrm{c}_\mathrm{S}(\SThres)}{}
&\stackrel{(a)}{=}\sum\nolimits_{t_0\in\{\L,\N\}}\int_0^\infty\lambdaST p_{t_0}(\|\y\|)\iftoggle{SC}{}{\times}\DCPbreak\iftoggle{SC}{}{\hspace{.3in}}\mathbb{P}^{\y!}\left[S(\y,\PPST)>\SThres,e(\y,\PPST)=1\right]\intd \y\nonumber\\
&\stackrel{(b)}{=}\sum\nolimits_{t_0\in\{\L,\N\}}\int_0^\infty\lambdaST p_{t_0}(\|\y\|)\iftoggle{SC}{}{\times}\DCPbreak\mathbb{P}\left[S(\y,\Phi_2)>\SThres \bigconditioned e(\y,\PPST)=1\right]\prob{e(\y,\PPST)=1}\intd \y\label{eq:app1e1}
\end{align}\ens
where $(a)$ is due to the Campbell Mecke theorem and $(b)$ is due to the Slivnyak theorem. Now $\prob{e(\y,\PPST)=1}$ can be computed as
\begs\begin{align*}
\iftoggle{SC}{}{&}\prob{e(\y,\PPST)=1}\iftoggle{SC}{&}{}=\prob{\ind{}\left(\frac{\powerS{0}C_{t_0}}{y^{\alpha_{t_0}}}> \frac{\powerS{j}C_{t_j}}{y_j^{\alpha_{t_j}}}\  \forall \y_j\in\PPST\right)}
\DCPbreak
=\prob{\prod_{\y_j\in\PPST}\ind{}\left(\frac{\powerS{0}C_{t_0}}{y^{\alpha_{t_0}}}> \frac{\powerS{j}C_{t_j}}{y_j^{\alpha_{t_j}}}\right)}\\
&\stackrel{(a)}{=}\prod\nolimits_{t\in\{\L,\N\}}\prob{\prod\nolimits_{\y_j\in{\PPST}_t}\ind{}\left(\fracS{\powerS{0}C_{t_0}}{y^{\alpha_{t_0}}}> \fracS{\powerS{j}C_{t}}{y_j^{\alpha_{t}}} \right)}\\
&\stackrel{(b)}{=}\prod\nolimits_{t\in\{\L,\N\}}\exp\left(
-2\pi\lambdaST\int_0^\infty\expects{\powerS{}}
{\ind{}\left(\fracS{\powerS{}C_t}{u^{\alpha_t}}
\DCPbreakIII\DCspace
>\fracS{\powerS{0}C_{t_0}}{y^{\alpha_{t_0}}}\right)}
p_t(u)u\intd u \vphantom{\int_0^\infty}
\right)\\
&\stackrel{(c)}{=}\prod\nolimits_{t\in\{\L,\N\}}
\exp\left(
	-2\pi\lambdaST\int_0^\infty\expects{\XX{}}
{\ind{}\left(\fracS{\XX{}C_t}{u^{\alpha_t}}
\DCPbreakIII\DCspace
>\fracS{\XX{}_{0}C_{t_0}}{y^{\alpha_{t_0}}}\right)}
	p_t(u)u\intd u \vphantom{\int_0^\infty}
\right)
\end{align*}\ens
where $(a)$ is due to independence of LOS and NLOS tiers, $(b)$ is from PGFL of PPP and $(c)$ is due to the fact that $\powerS{}=\XX \IT$.
Now using the transformation $u=(\XX C_t)^{\frac{1}{\alpha_t}}z$, we get
\begs\begin{align}
\iftoggle{SC}{}{&}\prob{e(\y,\PPST)=1}\iftoggle{SC}{&}{}
\DCPbreak
=\prod_{t\in\{\L,\N\}}\exp\left(-2\pi\lambdaST\int_0^\infty
\expects{\XX}{\ind{}\left(\frac{1}{z^{\alpha_t}}>\frac{\XX{}_{0}C_{t_0}}{y^{\alpha_{t_0}}}\right)
\DCPbreakII \DCspace p_t((\XX C_t)^{\frac{1}{\alpha_t}}z)(\XX {}C_t)^{\frac{2}{\alpha_t}}z\intd z}
\right)\nonumber\\
&=\prod_{t\in\{\L,\N\}}\exp\left(-2\pi\lambdaST\int_0^{{\left(\fracS{\XX{}_{0}C_{t_0}}{y^{\alpha_{t_0}}}\right)}^{-\frac1{\alpha_t}}}
\imdfuni{t}{z}
z\intd z\right).\label{eq:app1e2}
\end{align}\ens
Using the value from \eqref{eq:app1e2}, \eqref{eq:app1e1} can be written as
\begs\begin{align*}
&\Pc^\mathrm{c}_\mathrm{S}(\SThres)=\sum_{t_0\in\{\L,\N\}}\mathbb{E}_{\XX_0}\left[\int_0^\infty\lambda p_{t_0}(y)\mathbb{P}\left[
\frac{\AntGain{\su}{1} \ChannelSS{} \XX_{0} \IT C_{t_0}y^{-\alpha_{t_0}}}{I_\pu+I_\su+\NoiseS} \DCPbreakII > \SThres | \left(\frac{\XX_{0}\IT C_{t_0}}{y^{\alpha_{t_0}}}> \frac{\XX_{j}\IT C_{t_j}}{y^{\alpha_{t_j}}} \forall j\in\Phi_2\right)\right]\right.\\
&
\exp\left(-2\pi\lambdaST\int_0^{\left(\fracS{\XX_0C_{t_0}}{y^{\alpha_{t_0}}}\right)^{-\frac1{\alpha_\L}}}
\imdfuni{\L}{z}
z\intd z\right)
\DCPbreak\left.
\exp\left(-2\pi\lambdaST\int_0^{\left(\fracS{\XX_0C_{t_0}}{y^{\alpha_{t_0}}}\right)^{-\frac1{\alpha_\N}}}
\imdfuni{\N}{z}
z\intd z\right)
2\pi y\intd y\right].
\end{align*}\ens
Now, substituting $y=u\XX_{0}^{1/\alpha_{t_0}}C_{t_0}^{1/\alpha_{t_0}} $, we get
\begs\begin{align*}
\Pc^\mathrm{c}_\mathrm{S}(\SThres)&=\sum_{t_0\in\{\L,\N\}}\mathbb{E}_{\XXi{0}}\left[\int_0^\infty\lambda p_{t_0}(u\XXi{0}^{1/\alpha_{t_0}}C_{t_0}^{1/\alpha_{t_0}})
\DCPbreakI
\mathbb{P}\left[
\frac{\AntGain{\su}{1}\IT \ChannelSS{} u^{-\alpha_{t_0}}}{I_\pu+I_\su+\NoiseS}>\SThres \bigconditioned
 \left(u^{\alpha_{t_0}}< \frac{y^{\alpha_{t_j}}}{\XXi{j}C_{t_j}}\forall j\in\Phi_2\right)\right]\right.\\
&
\exp\left(-2\pi\lambdaST\int_0^{u^{\fracS{\alpha_{t_0}}{\alpha_\L}}}\imdfuni{\L}{z}z\intd z\right)\iftoggle{SC}{}{\XXi{0}^{2/\alpha_{t_0}}{C_\L}^{2/\alpha_{t_0}}}
\DCPbreak
\left.
\exp\left(-2\pi\lambdaST\int_0^{u^{\fracS{\alpha_{t_0}}{\alpha_\N}}}\imdfuni{\N}{z}z\intd z\right)
\iftoggle{SC}{\XXi{0}^{2/\alpha_{t_0}}{C_\L}^{2/\alpha_{t_0}}}{}2\pi u\intd u\right]
\end{align*}\ens
which can be further simplified by moving the expectation inside as $\iftoggle{SC}{}{\Pc^\mathrm{c}_\mathrm{S}(\SThres)=}$
\begs\begin{align}
&\iftoggle{SC}{\Pc^\mathrm{c}_\mathrm{S}(\SThres)=}{}\sum_{t_0\in\{\L,\N\}}
\int_0^\infty2\pi\lambdaST\mathbb{E}_{\XXi{0}}\left[ p_{t_0}(u\XXi{0}^{1/\alpha_{t_0}}C_{t_0}^{1/\alpha_{t_0}})\XXi{0}^{2/\alpha_{t_0}}\right]
\DCPbreak
\mathbb{P}\left[
\frac{\AntGain{\su}{1} \IT \ChannelSS{} u^{-\alpha_{t_0}}}{I_\pu+I'_\su+\NoiseS}>\SThres 
\right]\iftoggle{SC}{\nonumber\\
&}{}
\exp\left(-2\pi\lambdaST\int_0^{u^{\fracS{\alpha_{t_0}}{\alpha_\L}}}
\imdfuni{\L}{z}z\intd z\right)
\DCPbreak \exp\left(-2\pi\lambdaST\int_0^{u^{\fracS{\alpha_{t_0}}{\alpha_\N}}}\imdfuni{\N}{z}z\intd z\right)
{C_\L}^{2/\alpha_{t_0}}u\intd u\label{eq:laststep1}
\end{align}\ens
where $I'_\su$ is interference from the conditioned secondary PPP. Using the MGF of $\ChannelSS{}$, the inner SINR probability term can be written as
\begs\begin{align}
&\mathbb{P}\left[
\frac{\AntGain{\su}{1} \ChannelSS{} \IT u^{-\alpha_{t_0}}}{I_\pu+I'_\su+\NoiseS}>\SThres
\right]\DCPbreak
=\exp\left(-\frac{\SThres u^{\alpha_{t_0}} }{\IT \AntGain{\su}{1} }\NoiseS\right)
\laplace{I_\pu}\left(\frac{\SThres u^{\alpha_{t_0}} }{ \IT\AntGain{\su}{1}  }\right)
\laplace{I'_\su}\left(\frac{\SThres u^{\alpha_{t_0}} }{ \IT\AntGain{\su}{1} }\right)\label{eq:laststep2}
\end{align}\ens
Using the definition of $\imdfuni{t}{z}$ and substituting \eqref{eq:laststep2} in \eqref{eq:laststep1}, we get the Lemma.

\section{Proof of Lemma \ref{lemma:SecPerSI}: \Secondary Interference at \UES}\label{app:SecPerSI}
The  interference from the conditional \secondary PPP is given as 
\begs\begin{align}
I'_\su=
	\sum_{\y_i\in \PPST} 
			\ChannelSS{i} \indside{\frac{y_i^{\alpha_{\LinkSS{i}}}}{\XXi{i}C_{\LinkSS{i}}}>u^{\alpha_{t_0}}}
			\IT\XXi{i}C_{\LinkSS{i}} y_i^{-\alpha_{\LinkSS{i}}}\AntGain{\su}{}(\theta_i).\nonumber
\end{align}\ens
$I'_\su$ can be split into interference from LOS and NLOS BSs in $\PPST$ as $I'_\su=I'_{\su\L}+ {I'_{\su\N}}$. Hence, the Laplace transform of $I'_\su$  can be expressed as product of Laplace transforms of $I'_{\su\L}$ and $I'_{\su\N}$. Now, the Laplace transform of  $I'_{\su\L}$ is given as 
\begs\begin{align*}
&\laplace{I'_{\su\L}}(s)=\expect{\exp
	\left(
		-s\sum_{\y_i\in {\PPST}_{\L}} 
			\ChannelSS{i}\indside{\frac{y_i^{\alpha_\L}}{\XXi{i}C_\L}>u^{\alpha_{t_0}}}
			\DCPbreakII\DCspace\DCspace\vphantom{\sum_{\y_i\in {\PPST}_{\L}} }
			\IT\XXi{i}C_\L y_i^{-\alpha_\L}\AntGain{\su}{}(\theta_i) 
	\right)}
	\\
&\stackrel{(a)}{=}\exp\left(
	-\lambdaST 2\pi \expects{\powerS{},\theta}{
		\int_{0}^\infty
			\left(1-
			\expU{
				-s \ChannelSS{} \IT\XXi{} C_\L y^{-\alpha_\L}\AntGain{\su}{}(\theta)}\right) \DCPbreakII\DCspace
				\indside{\frac{y^{\alpha_\L}}{\XXi{}C_\L}>u^{\alpha_{t_0}}}
			p_\L(y) y \intd y
	}
\right)
\end{align*}\ens
where $(a)$ is due to PGFL of the PPP. Now using the transformation $y=v(\XXi{}C_\L)^{1/\alpha_\L}$, we get
\begs\begin{align*}
\laplace{I'_{\su\L}}(s)
&=\exp\left(
	-\lambdaST 2\pi \expects{\powerS{},\theta}{
		\int_{0}^\infty
			\left(1-\expU{
				-s\ChannelSS{}\IT v^{-\alpha_\L}\AntGain{\su}{}(\theta) 		
			}\right)\DCPbreakII
			\indside{v^{\alpha_\L}>u^{\alpha_{t_0}}}
		p_\L(v(\XXi{}C_\L)^{\frac1{\alpha_\L}}) (\XXi{}C_\L)^{\frac2{\alpha_\L}} v \intd v
	}
\right).
\end{align*}\ens
Now moving  the expectation with respect to $\theta$ and $\XXi{}$ inside the integration, we get
\begs\begin{align}
\laplace{I'_{\su\L}}(s)&=\exp\left(
	-\lambdaST 2\pi 
		\int_{u^{\frac{\alpha_{t_0}}{\alpha_\L}}}^\infty
			\expects{\theta}{1-\expU{
				-s \ChannelSS{}\IT v^{-\alpha_\L}\AntGain{\su}{}(\theta) 		
			}}
			\DCPbreakI\DCspace
			\imdfuni{\L}{v} v \intd v
\right).\label{eq:laplaceSCellularStep4}
\end{align}\ens
Now, using definition of $a_k$'s, the inner term can be written as
\begs\begin{align}
&1-\expect{
	\exp\left(-s \ChannelSS{} \IT v^{-\alpha_\L}\AntGain{\su}{}(\theta) \right)
	}
=\expect{
	\frac {s \IT v^{-\alpha_\L}\AntGain{\su}{}(\theta) }
	 	{1+s\IT v^{-\alpha_\L}\AntGain{\su}{}(\theta) }
	}
	\DCPbreak
=\sum_{k=1}^2
	\frac {a_k }
		{1+\IT^{-1}s^{-1} \AntGain{\su}{k}^{-1}v^{\alpha_\L}}.\label{eq:mgfstepC}
\end{align}\ens
Using \eqref{eq:mgfstepC} in  \eqref{eq:laplaceSCellularStep4}, we get $\iftoggle{SC}{}{\laplace{I_{\su\L}}(s)=}$
\begs\begin{align}
\iftoggle{SC}{\laplace{I_{\su\L}}(s)=}{}
&\exp\left(
	-\lambdaST2\pi \int_{u^{\frac{\alpha_{t_0}}{\alpha_\L}}}^\infty
		\sum_{k=1}^2
			\frac {a_k }
				{1+\IT^{-1}s^{-1} \AntGain{\su}{k}^{-1}v^{\alpha_\L}}
			\imdfuni{\L}{v} v\intd v
\right).\nonumber
\end{align}\ens
Similarly $\laplace{I'_{\su\N}}(s)$ can be computed. Multiplying the values of $\laplace{I'_{\su\L}}(s)$  and $\laplace{I'_{\su\N}}(s)$ and using the definition of $F_\su(B,e)$, we get the Lemma.

\section{Proof of Lemma \ref{lemma:SecPerPI}: \Primary Interference at \UES}\label{app:SecPerPI}
The  \primary interference is given as $I_\pu=
	\sum_{\x_i\in \PPPT } 
			\ChannelPS{i}\powerP{} C_{\LinkPS{i}} x_i^{-\alpha_{\LinkPS{i}}}\AntGain{\pu}{}(\theta_i)$.
Similar to Appendix \ref{app:SecPerSI}, $\laplace{I_\pu}(s)=\laplace{I_{\pu\L}}(s)\laplace{I_{\pu\N}}(s)$. Using the PPP's PGFL, $\laplace{I_{\pu\L}}(s)$ can be computed as
\begs\begin{align*}
\laplace{I_{\pu\L}}(s)
&=\exp\left(
	-\lambdaPT 2\pi \expects{\theta}{
		\int_{0}^\infty
			\left(1-
			\expU{
				-s \ChannelPS{i}\powerP{} C_\L x^{-\alpha_\L}\AntGain{\pu}{}(\theta) 
			}
			\right)\DCPbreakII \DCspace \vphantom{\int_{0}^\infty}
			p_\L(x) x \intd x
	}
\right).
\end{align*}\ens
Now using the transformation $x=v(\powerP{}C_\L)^{1/\alpha_\L}$, we get
\begs\begin{align*}
\laplace{I_{\pu\L}}(s)
&=\exp\left(
	-\lambdaPT 2\pi \expects{\powerS{},\theta}{
		\int_{0}^\infty
			\left(1-\expU{
				-s\ChannelPS{i}v^{-\alpha_\L}\AntGain{\pu}{}(\theta)
			}\right)\DCPbreakII
		p_\L(v(\powerP{i}C_\L)^{\frac1{\alpha_\L}}) (\powerP{}C_\L)^{\frac2{\alpha_\L}} v \intd v
	}
\right).
\end{align*}\ens
Now, interchanging the order of  expectation and integration and using  $\imdfunP{t}{\cdot}$'s definition, we get
\begs\begin{align}
\laplace{I_{\pu\L}}(s)&=\exp\left(
	-\lambdaPT 2\pi 
		\int_{0}^\infty
			\expects{\theta}{1-\expU{
				-s \ChannelPS{} v^{-\alpha_\L}\AntGain{\pu}{}(\theta) 
			}}\DCPbreakI
			\imdfunP{\L}{v}  v \intd v
\right).\label{eq:laplacePCellularStep4}
\end{align}\ens
Now, using definition of  $b_k$'s, the inner term can be written as
\begs\begin{align}
&1-\expect{
	\exp\left(-s \ChannelPS{}v^{-\alpha_\L}\AntGain{\pu}{}(\theta) \right)
	}
=\expect{
	\frac {s v^{-\alpha_\L}\AntGain{\pu}{}(\theta) }
	 	{1+sv^{-\alpha_\L}\AntGain{\pu}{}(\theta) }
	}\DCPbreak
=\sum_{k=1}^2
	\frac {b_k }
		{1+s^{-1} \AntGain{\pu}{k}^{-1}v^{\alpha_\L}}\label{eq:tempstep1}.
\end{align}\ens
\iftoggle{SC}{\vspace{-.18in}
\begs\begin{align}
&\text{\begin{normalsize}Using \eqref{eq:tempstep1} in  \eqref{eq:laplacePCellularStep4}, we get\end{normalsize} } \laplace{I_{\pu\L}}(s)=
\exp\left(
	-\lambdaST2\pi \int_{0}^\infty
		\sum_{k=1}^2
			\frac {b_k }
				{1+s^{-1} \AntGain{\pu}{k}^{-1}v^{\alpha_\L}}
			\imdfunP{\L}{v}
			 v\intd v
\right).\nonumber\hspace{2in}
\end{align}\ens}
{Using \eqref{eq:tempstep1} in  \eqref{eq:laplacePCellularStep4}, we get $\iftoggle{SC}{}{\laplace{I_{\pu\L}}(s)=}$
\begs\begin{align}
\iftoggle{SC}{\laplace{I_{\pu\L}}(s)=}{}
&\exp\left(
	-\lambdaST2\pi \int_{0}^\infty
		\sum_{k=1}^2
			\frac {b_k }
				{1+s^{-1} \AntGain{\pu}{k}^{-1}v^{\alpha_\L}}
			\imdfunP{\L}{v}
			 v\intd v
\right).\nonumber
\end{align}\ens}
Similarly $\laplace{I_{\pu\N}}(s)$ can be computed. Using the values of $\laplace{I_{\pu\L}}(s)$  and $\laplace{I_{\pu\N}}(s)$ and the definition of $F_\pu(B)$, we get the Lemma.

\section{Proof of Lemma \ref{lemma:PriSI}: \Secondary Interference at \UEP}\label{app:PriSI}
Let us first consider ${I_{\foreign}}$  which is given as
\begs\begin{align}
{I_{\foreign}}(s)=&
		\sum_{t\in\{\L,\N\}}\sum_{\y_i\in {\PPST}_t} 
			\ChannelSP{i} \indside{C_{\LinkHome{i}}R_i^{-\alpha_{\LinkHome{i}}}>C_{t} y_i^{-\alpha_{t}}}
			\DCPbreak\iftoggle{SC}{}{\hspace{1.2in}\times}
			\IT\XXi{i}C_{t} y_i^{-\alpha_{t}}\AntGain{\su}{}(\theta_i) 
\end{align}\ens
where the indicator term denotes that only those  \secondary BSs are considered whose  receiver power at the their home \primary user is greater than their received power at \UEP~which means that \UEP~is not the home \primary user for these BSs. Now its Laplace transform  is equal to
\begs\begin{align}
&\laplace{I_{\foreign}}(s)\stackrel{(a)}{=}\prod_{t{}\in\{\L,\N\}}\expect{\exp
	\left(
		-s\sum_{\y_i\in  {\PPST}_t} 
			\ChannelSP{i} \indside{1/\XXi{}>C_{t{}} y_i^{-\alpha_{t{}}}}
			\DCPbreakII
			\iftoggle{SC}{}{\hspace{1.2in}\times}
\IT\XXi{i}C_{t{}} y_i^{-\alpha_{t{}}}\AntGain{\su}{}(\theta_i) 
	\right)}\nonumber\\
&\stackrel{(b)}{=}\prod_{t{}}\exp\left(
	-\lambdaST 2\pi \expects{\XXi{},\theta}{
		\int_{0}^\infty\left(
			1-
			\expU{
				-s \ChannelSP{} \IT \XXi{} C_{t{}} y^{-\alpha_{t{}}}\AntGain{\su}{}(\theta) 
			}\right)\DCPbreakII\iftoggle{SC}{}{\hspace{1.2in}\times}
			\indside{\XXi{}<C_{t{}}^{-1} y_i^{\alpha_{t{}}}}
			p_t(y) y \intd y
	}
\right)\nonumber
\end{align}\ens
where $(a)$ is from independence of LOS and NLOS tiers and $(b)$ is due to the PGFL of PPP. Now, using the transformation $y=v(\XXi{} C_{t{}})^{1/\alpha_{t{}}}$, we get $\laplace{I_{\foreign}}(s)=$
\begs\begin{align}
&\prod_{t{}}\exp\left(
	-\lambdaST 2\pi \expects{\XXi{},\theta}{
		\int_{0}^\infty\left(
			1-
			\exp\left(
				-s \ChannelSP{} \IT  v^{-\alpha_{t{}}}\AntGain{\su}{}(\theta) \right)
				\DCPbreakIII\iftoggle{SC}{}{\hspace{.2in}\times}
				\indside{v>1}
			\right)
			p_{t}(v(\XXi{} C_{t{}})^{\frac1{\alpha_{t{}}}}) v(\XXi{} C_{t{}})^{\frac{2}{\alpha_{t{}}}}  \intd v
	}
\right)\nonumber\\
&\stackrel{(a)}{=}\prod_{t{}}\exp\left(
	-\lambdaST 2\pi \expects{\theta}{
		\int_{1}^\infty\left(
			1-
			\exp\left(
				-s \ChannelSP{} \IT  v^{-\alpha_{t{}}}\AntGain{\su}{}(\theta) 	
			\right)\right)\DCPbreakII\iftoggle{SC}{}{\hspace{.2in}\times}
			\imdfuni{t{}}{v} v\intd v
	}
\right)\nonumber
\end{align}\ens
where $(a)$ is due to interchanging the integration and the expectation with respect to $\XXi{}$  and applying $K_t$'s definition. Now, using the MGF of   $\ChannelSP{} $ and the distribution of $\AntGain{\su}{}(\theta)$, we get \iftoggle{SC}{}{$\laplace{I_{\foreign}}(s)=$}
\begs\begin{align}
\iftoggle{SC}{\laplace{I_{\foreign}}(s)=}{}
&\prod_{t{}}\exp\left(
	-\lambdaST2\pi \int_{1}^\infty
		\sum_{k=1}^2
			\frac {a_k }
				{1+s^{-1}\IT^{-1} \AntGain{\su}{k}^{-1}v^{\alpha_{t{}}}}
			\imdfuni{t{}}{v} v\intd v
\right)\nonumber.
\end{align}\ens
Now substituting $u={(s\IT\AntGain{\su}{k})}^{-1/\alpha_{t{}}} v$, we get
\begs\begin{align}
\laplace{I_{\foreign}}(s)
&=\prod_{t{}}\exp\left(
	-\lambdaST2\pi {(s\IT\AntGain{\su}{k})}^{2/\alpha_{t{}}} 
	\DCPbreakI
	\int_{{(s\IT\AntGain{\su}{k})}^{-1/\alpha_{t{}}}}^\infty
		\sum_{k=1}^2
			\frac {a_k }
				{1+u^{\alpha_{\LinkSP{}}}}
			\imdfuni{t{}}{{(s\IT\AntGain{\su}{k})}^{1/\alpha_{t{}}}u} u\intd u
\right)\nonumber.
\end{align}\ens
Using the definition of  $E_{\foreign }(B,\IT)$, we get
\begs\begin{align}
\laplace{I_{\foreign}}(s)&=
		\exp\left(-\lambdaST  \sum_{k=1}^2a_k
E_{\foreign}\left(s  \AntGain{\su}{k} ,\IT\right)
\right).\label{eq:foreignformula}
\end{align}\ens
Now, let us consider ${I_{\native}}$ which is given as \iftoggle{SC}{}{$\laplace{I_{\native}}(s)=$}
\begs\begin{align}
\iftoggle{SC}{{I_{\native}}(s)=}{}&
		\sum_{t{}\in\{\L,\N\}}\sum_{i\in {\PPST}_\L} 
			\ChannelSP{i} \left(1-\indside{C_{\LinkHome{i}}R_i^{-\alpha_{\LinkHome{i}}}>C_\L y_i^{-\alpha_L}}\right)
			\IT\AntGain{\su}{}(\theta_i) \nonumber
\end{align}\ens
where the indicator term are exact opposite of the previous case and denotes that only those secondary BSs are considered whose receiver power at the their home \primary user is not greater than their received power at \UEP. 
Note that the interference from each of these secondary BSs is equal to $\IT$. Hence, its Laplace transform  is given as $\iftoggle{SC}{}{\laplace{I_{\native}}(s)=}{}$
\begs \begin{align}
&\iftoggle{SC}{\laplace{I_{\native}}(s)=}{}\prod_{t\in\{\L,\N\}}\expect{\exp
	\left(
		-s\sum_{i\in {\PPST}_\L} 
			\ChannelSP{i}
			\indside{
			\iftoggle{SC}{C_{T_i}
			R_i^{-\alpha_{T_i}}}{\frac{C_{T_i}}{R_i^{\alpha_{T_i}}}}
			<
			\iftoggle{SC}{C_t y_i^{-\alpha_t}}{\frac{C_t}{ y_i^{\alpha_t}}}}
			\IT\AntGain{\su}{}(\theta_i) 
	\right)}\nonumber\\
&\stackrel{(a)}{=}\prod_{t\in\{\L,\N\}}\exp\left(
	-\lambdaST 2\pi \expects{\XXi{},\theta}{
		\int_{0}^\infty
			\left(1-
			\exp\left(
				-s \ChannelSP{}\IT\AntGain{\su}{}(\theta) 
				\DCPbreakIV\iftoggle{SC}{}{\hspace{1in}\times}				\indside{\frac{y^{\alpha_t}}{\XXi{}C_t}<1}
			\right)\right)
			p_t(y) y \intd y
	}
\right)\nonumber
\end{align}\ens
where $(a)$ is due to PGFL of a PPP. Substituting $y=(\XXi{}C_t)^{1/\alpha_t}v$, we get $\laplace{I_{\native}}(s)=$
\begs\begin{align}
&\prod_{t\in\{\L,\N\}}\exp\left(
	-\lambdaST 2\pi \expects{\XXi{},\theta}{
		\int_{0}^\infty
			\left(1-
			\expU{
				-s \ChannelSP{}\IT\AntGain{\su}{}(\theta) 
			}\right)
			\DCPbreakII\DCspace\DCspace
			\indside{v^{\alpha_t}<1}
			p_t((\XXi{}C_t)^{\frac1{\alpha_t}}v) (\XXi{}C_t)^{\frac2{\alpha_t}}v \intd v
	}
\right).\nonumber
\end{align}\ens
Now using the MGF of exponential $\ChannelSP{}$ and the PMF of $\AntGain{\su}{}(\theta) $, we get 
\begs\begin{align}
\laplace{I_{\native}}(s)
&=\exp\left(
	-\lambdaST 2\pi \left[\sum_{k=1}^2{\frac{a_k}{
			1+
				(s \IT\AntGain{\su}{k} )^{-1} }}\right]
				\DCPbreakI
				\left(
		\int_0^{1}
			\imdfuni{\L}{v}v \intd v+
			\int_0^{1}
			\imdfuni{\N}{v}v \intd v
			\right)
\right).\label{eq:nativeformula}
\end{align}\ens
Using  \eqref{eq:foreignformula} and \eqref{eq:nativeformula}, we get the Lemma.

\linespread{1.2}


\end{document}